\begin{document}
 
\newtheorem{theorem}{Theorem} 
\newtheorem{corollary}[theorem]{Corollary}
\newtheorem{prop}[theorem]{Proposition} 
\newtheorem{problem}[theorem]{Problem}
\newtheorem{lemma}[theorem]{Lemma} 
\newtheorem{remark}[theorem]{Remark}
\newtheorem{observation}[theorem]{Observation}
\newtheorem{defin}[theorem]{Definition} 
\newtheorem{example}[theorem]{Example}
\newtheorem{conj}{Conjecture} 
\newenvironment{proof}{{\bf Proof:}}{\hfill$\Box$} 
\newcommand{\PR}{\noindent {\bf Proof:\ }} 
\def\EPR{\hfill $\Box$\linebreak\vskip.5mm} 
 
\def\Pol{{\sf Pol}} 
\def\mPol{{\sf MPol}} 
\def\Polo{{\sf Pol}_1\;} 
\def\PPol{{\sf pPol\;}} 
\def\Inv{{\sf Inv}}
\def\mInv{{\sf MInv}} 
\def\Clo{{\sf Clo}\;} 
\def\Con{{\sf Con}} 
\def\concom{{\sf Concom}\;} 
\def\End{{\sf End}\;}
\def\Sub{{\sf Sub}\;} 
\def\Im{{\sf Im}} 
\def\Ker{{\sf Ker}}
\def\H{{\sf H}}
\def\S{{\sf S}} 
\def\D{{\sf P}} 
\def\I{{\sf I}} 
\def\Var{{\sf var}} 
\def\PVar{{\sf pvar}} 
\def\fin#1{{#1}_{\rm fin}}
\def\P{{\sf P}} 
\def\Pfin{{\sf P_{\rm fin}}} 
\def\R{{\rm R}} 
\def\F{{\rm F}} 
\def\Term{{\sf Term}}
\def\var#1{{\sf var}(#1)} 
\def\Sg#1{{\sf Sg}(#1)} 
\def\Cg#1{{\sf Cg}(#1)} 
\def\tol{{\sf tol}} 
\def\rbcomp#1{{\sf rbcomp}(#1)}
\def\vect{{\sf vect}}
  
\let\cd=\cdot 
\let\eq=\equiv 
\let\op=\oplus 
\let\omn=\ominus
\let\meet=\wedge 
\let\join=\vee 
\let\tm=\times
\def\ldiv{\mathbin{\backslash}} 
\def\rdiv{\mathbin/}
  
\def\typ{{\sf typ}} 
\def\zz{{\un 0}} 
\def\zo{{\un 1}}
\def\one{{\bf1}} 
\def\two{{\bf2}} 
\def\three{{\bf3}}
\def\four{{\bf4}} 
\def\five{{\bf5}}
\def\pq#1{(\zz_{#1},\mu_{#1})}
  
\let\wh=\widehat 
\def\ox{\ov x} 
\def\oy{\ov y} 
\def\oz{\ov z}
\def\of{\ov f} 
\def\oa{\ov a} 
\def\ob{\ov b} 
\def\oc{\ov c}
\def\od{\ov d} 
\def\oov{\ov v} 
\def\ow{\ov w} 
\def\oob{\ov{\ov b}} 
\def\rx{{\rm x}}
\def\rf{{\rm f}} 
\def\rrm{{\rm m}} 
\let\un=\underline
\let\ov=\overline 
\let\cc=\circ 
\let\rb=\diamond 
\def\ta{{\tilde a}} 
\def\tz{{\tilde z}}
  
  
\def\zZ{{\mathbb Z}} 
\def\B{{\mathcal B}} 
\def\P{{\mathcal P}}
\def\zL{{\mathbb L}} 
\def\zD{{\mathbb D}}
 \def\zE{{\mathbb E}}
\def\zG{{\mathbb G}} 
\def\zA{{\mathbb A}} 
\def\zB{{\mathbb B}}
\def\zC{{\mathbb C}} 
\def\zM{{\mathbb M}} 
\def\zR{{\mathbb R}}
\def\zS{{\mathbb S}} 
\def\zT{{\mathbb T}} 
\def\zN{{\mathbb N}}
\def\zQ{{\mathbb Q}} 
\def\zW{{\mathbb W}} 
\def\bK{{\bf K}}
\def\C{{\bf C}} 
\def\M{{\bf M}} 
\def\E{{\bf E}} 
\def\N{{\bf N}}
\def\O{{\bf O}} 
\def\bN{{\bf N}} 
\def\bX{{\bf X}} 
\def\GF{{\rm GF}} 
\def\cC{{\mathcal C}} 
\def\cA{{\mathcal A}}
\def\cB{{\mathcal B}} 
\def\cD{{\mathcal D}} 
\def\cE{{\mathcal E}} 
\def\cF{{\mathcal F}} 
\def\cG{{\mathcal G}} 
\def\cH{{\mathcal H}}
\def\cI{{\mathcal I}} 
\def\cL{{\mathcal L}} 
\def\cM{{\mathcal M}} 
\def\cP{{\mathcal P}} 
\def\cR{{\mathcal R}} 
\def\cRY{{\mathcal RY}}
\def\cS{{\mathcal S}} 
\def\cT{{\mathcal T}} 
\def\oB{{\ov B}}
\def\oC{{\ov C}} 
\def\ooB{{\ov{\ov B}}} 
\def\ozB{{\ov{\zB}}}
\def\ozD{{\ov{\zD}}} 
\def\ozG{{\ov{\zG}}}
\def\tcA{{\widetilde\cA}} 
\def\tcC{{\widetilde\cC}}
\def\tcF{{\widetilde\cF}} 
\def\tcI{{\widetilde\cI}}
\def\tB{{\widetilde B}} 
\def\tC{{\widetilde C}}
\def\tD{{\widetilde D}} 
\def\ttB{{\widetilde{\widetilde B}}}
\def\ttC{{\widetilde{\widetilde C}}}
\def\tba{{\tilde\ba}} 
\def\ttba{{\tilde{\tilde\ba}}}
\def\tbb{{\tilde\bb}} 
\def\ttbb{{\tilde{\tilde\bb}}}
\def\tbc{{\tilde\bc}} 
\def\tbd{{\tilde\bd}}
\def\tbe{{\tilde\be}} 
\def\tbt{{\tilde\bt}}
\def\tbu{{\tilde\bu}} 
\def\tbv{{\tilde\bv}}
\def\tbw{{\tilde\bw}} 
\def\tdl{{\tilde\dl}} 
\def\ocP{{\ov\cP}}
\def\tzA{{\widetilde\zA}} 
\def\tzC{{\widetilde\zC}}
\def\new{{\mbox{\footnotesize new}}}
\def\old{{\mbox{\footnotesize old}}}
\def\prev{{\mbox{\footnotesize prev}}}
\def\oo{{\mbox{\sf\footnotesize o}}}
\def\pp{{\mbox{\sf\footnotesize p}}}
\def\nn{{\mbox{\sf\footnotesize n}}} 
\def\oR{{\ov R}}
  
  
\def\gA{{\mathfrak A}} 
\def\gV{{\mathfrak V}} 
\def\gS{{\mathfrak S}} 
\def\gK{{\mathfrak K}} 
\def\gH{{\mathfrak H}}
  
\def\ba{{\bf a}} 
\def\bb{{\bf b}} 
\def\bc{{\bf c}} 
\def\bd{{\bf d}} 
\def\be{{\bf e}} 
\def\bbf{{\bf f}} 
\def\bg{{\bf g}}
\def\bh{{\bf h}}
\def\bi{{\bf i}} 
\def\bo{{\bf o}} 
\def\bs{{\bf s}} 
\def\bu{{\bf u}} 
\def\bt{{\bf t}} 
\def\bv{{\bf v}} 
\def\bx{{\bf x}}
\def\by{{\bf y}} 
\def\bw{{\bf w}} 
\def\bz{{\bf z}}
\def\ga{{\mathfrak a}} 
\def\oal{{\ov\al}} 
\def\obeta{{\ov\beta}}
\def\ogm{{\ov\gm}} 
\def\oep{{\ov\varepsilon}}
\def\oeta{{\ov\eta}} 
\def\oth{{\ov\th}} 
\def\ovm{{\ov\mu}}
\def\ozero{{\ov0}}
  
  
\def\CCSP{\hbox{\rm c-CSP}} 
\def\CSP{{\rm CSP}} 
\def\NCSP{{\rm \#CSP}} 
\def\mCSP{{\rm MCSP}} 
\def\FP{{\rm FP}} 
\def\PTIME{{\bf PTIME}} 
\def\GS{\hbox{($*$)}} 
\def\ry{\hbox{\rm r+y}}
\def\rb{\hbox{\rm r+b}} 
\def\Gr#1{{\mathrm{Gr}(#1)}}
\def\Grp#1{{\mathrm{Gr'}(#1)}} 
\def\Grpr#1{{\mathrm{Gr''}(#1)}}
\def\Scc#1{{\mathrm{Scc}(#1)}} 
\def\rel{R} 
\def\relo{Q}
\def\rela{S} 
\def\dep{\mathsf{dep}}
\def\Filt#1{\mathsf{Ft}(#1)}
\def\Filts{\mathrm{Fts}} 
\def\Agr{$\mathbb{A}$}
\def\Al{\mathrm{Alg}}
\def\Alg{\mathrm{Alg}}
\def\Sig{\mathrm{Sig}}
\def\strat{\mathsf{strat}}
\def\relmax{\mathsf{relmax}}
\def\srelmax{\mathsf{srelmax}}
\def\Meet{\mathsf{Meet}}
\def\amax{\mathsf{amax}}
\def\as{\mathsf{as}}
\def\smax#1{\mathsf{max}(#1)}
\def\pmax#1{\mathsf{pmax}(#1)}
\def\smin{{\sf min}}
\def\cPmax{\cP^{\mathsf{max}}}
\def\star{\hbox{$\mathbf{(*)}$}}
\def\link{\mathsf{emin}}
\def\mal{{m}}
\def\bmal{\mathbf{m}}
\def\bot{\mathsf{bot}}
\def\down{\mathsf{down}}
\def\next{\mathsf{next}}

  
\let\sse=\subseteq 
\def\ang#1{\langle #1 \rangle}
\def\angg#1{\left\langle #1 \right\rangle}
\def\dang#1{\ang{\ang{#1}}} 
\def\vc#1#2{#1 _1\zd #1 _{#2}}
\def\zd{,\ldots,} 
\let\bks=\backslash 
\def\red#1{\vrule height7pt depth3pt width.4pt
\lower3pt\hbox{$\scriptstyle #1$}}
\def\fac#1{/\lower2pt\hbox{$\scriptstyle #1$}}
\def\me{\stackrel{\mu}{\eq}} 
\def\nme{\stackrel{\mu}{\not\eq}}
\def\eqc#1{\stackrel{#1}{\eq}} 
\def\cl#1#2{\arraycolsep0pt
\left(\begin{array}{c} #1\\ #2 \end{array}\right)}
\def\cll#1#2#3{\arraycolsep0pt \left(\begin{array}{c} #1\\ #2\\
#3 \end{array}\right)} 
\def\clll#1#2#3#4{\arraycolsep0pt
\left(\begin{array}{c} #1\\ #2\\ #3\\ #4 \end{array}\right)}
\def\cllll#1#2#3#4#5#6{ \left(\begin{array}{c} #1\\ #2\\ #3\\
#4\\ #5\\ #6 \end{array}\right)} 
\def\pr{{\rm pr}}
\def\perm#1#2#3{\left(\begin{array}{ccc} 1&2&3\\ #1&#2&#3
\end{array}\right)} 
\def\w{$\wedge$} 
\let\ex=\exists
\def\Id#1{{{\sf Id}(#1)}}
\def\tms{\tm\dots\tm}

  
\def\lb{$\linebreak$}  
  
\def\ar{\hbox{ar}} 
\def\Im{{\sf Im}} 
\def\deg{{\sf deg}}
\def\id{{\rm id}}
  
\let\al=\alpha 
\let\gm=\gamma 
\let\dl=\delta 
\let\ve=\varepsilon
\let\ld=\lambda 
\let\om=\omega 
\let\vf=\varphi 
\let\vr=\varrho
\let\th=\theta 
\let\sg=\sigma 
\let\Gm=\Gamma 
\let\Dl=\Delta
\let\kp=\kappa  

  
\font\tengoth=eufm10 scaled 1200 
\font\sixgoth=eufm6
\def\goth{\fam12} 
\textfont12=\tengoth 
\scriptfont12=\sixgoth
\scriptscriptfont12=\sixgoth 
\font\tenbur=msbm10
\font\eightbur=msbm8 
\def\bur{\fam13} 
\textfont11=\tenbur
\scriptfont11=\eightbur 
\scriptscriptfont11=\eightbur
\font\twelvebur=msbm10 scaled 1200 
\textfont13=\twelvebur
\scriptfont13=\tenbur 
\scriptscriptfont13=\eightbur
\mathchardef\nat="0B4E 
\mathchardef\eps="0D3F

\title{Constraint Satisfaction Problems over semilattice block Mal'tsev algebras}
\author{Andrei A.\ Bulatov
} 
\date{}
\maketitle

\begin{abstract}
There are two well-known types of algorithms for solving CSPs: local propagation and 
generating a basis of the solution space. For several years the focus of the CSP
research has been on `hybrid' algorithms that somehow combine the two approaches.
In this paper we present a new method of such hybridization that allows us
to solve certain CSPs that has been out of reach for a quite a while. 

We apply this method to CSPs parametrized by a universal algebra, an approach 
that has been very popular in the last decade or so. Specifically, we consider 
a fairly restricted class of algebras we will call semilattice block Mal'tsev. 
An algebra $\zA$ is called semilattice block Mal'tsev if it has a binary operation $f$, 
a ternary operation $m$, and a congruence $\sg$ such that the quotient $\zA\fac\sg$ 
with operation $f$ is a semilattice, $f$ is a projection on every block of $\sg$, 
and every block of $\sg$ is a Mal'tsev algebra with Mal'tsev operation $m$. This
means that the domain in such a CSP is partitioned into blocks such that
if the problem  is considered on the quotient set $\zA\fac\sg$, 
it can be solved by a simple
constraint propagation algorithm. On the other hand, if the problem is restricted on individual blocks, it can be solved by generating a basis of the 
solution space. We show that the two methods can be combined in a highly
nontrivial way, and therefore the constraint satisfaction problem over a 
semilattice block Mal'tsev algebra is solvable in polynomial time.
\end{abstract}

\section{Introduction}

In a Constraint Satisfaction Problem (CSP, for short) we need to decide whether
or not a given set of constraints on values that can be assigned simultaneously
to a given set of variables can be satisfied. While the general CSP is 
NP-complete, its versions restricted by specifying a constraint language, a set
of allowed constraints, are sometimes solvable in polynomial time. For a 
constraint language $\Gm$ the corresponding restricted CSP is denoted
$\CSP(\Gm)$ and called a nonuniform CSP.
The study of the complexity of nonuniform CSPs
has been initiated by Schaefer \cite{Schaefer78:complexity}. In that paper 
Schaefer determined 
the complexity of $\CSP(\Gm)$ for constraint languages on a 2-element set. The 
complexity of $\CSP(\Gm)$ for constraint languages over 
finite sets has
been attracting much attention since then. This research is guided
by the Dichotomy Conjecture proposed by Feder and Vardi 
\cite{Feder93:monotone,Feder98:monotone} that states that every CSP of the
form $\CSP(\Gm)$ for a constraint language $\Gm$ on a finite set is either 
solvable in polynomial time or is NP-complete. The Dichotomy Conjecture 
has been restated and made more precise in different languages, see, e.g.\  
\cite{Bulatov05:classifying,Nesetril10:combinatorial}. Also, several powerful 
approaches to the problem have been developed, through algebra, 
logic, and graph theory. So far the most successful method of studying the 
complexity 
of the CSP has been the algebraic approach introduced by Jeavons et al.\ 
\cite{Bulatov03:multi,Bulatov05:classifying,Bulatov08:recent,Jeavons97:closure}. 
This approach relates the complexity of $\CSP(\Gm)$ to the properties of a 
certain universal algebra $\zA_\Gm$ associated with $\Gm$. In particular 
it allows one to expand $\CSP(\Gm)$ to the problem $\CSP(\zA_\Gm)$ that 
depends only on the associated algebra, without changing its complexity. It 
therefore suffices to restrict ourselves to 
the study of the complexity of problems of the form $\CSP(\zA)$, where $\zA$ is
a finite universal algebra.

The dichotomy conjecture  has been confirmed 
in a number of cases: for constraint languages on 2- and 3-element sets
\cite{Bulatov06:3-element,Schaefer78:complexity} (a dichotomy result was also 
announced for languages over 4-, 5-, and 7-element sets 
\cite{Markovic11:4-element,Zhuk16:5-element,Zhuk16:7-element}), 
for constraint languages containing all unary relations 
\cite{Barto11:conservative,Bulatov11:conservative,Bulatov16:conservative}, and
several others, see, e.g.\ \cite{Barto14:local,Barto09:sources,Idziak10:few}. 
One of the most remarkable phenomena discovered is that, generally, there are only
two types of algorithms applicable to CSPs solvable in polynomial time. The first
one has long been known to researchers in Artificial Intelligence as constraint 
propagation \cite{Dechter03:constraint}.  Algorithms of the other type resemble 
Gaussian elimination in the sense that they construct a small generating set of 
the set of all solutions \cite{Bulatov06:simple,Idziak10:few}. The scope of both 
types of algorithms is precisely known \cite{Barto14:local,Idziak10:few}.

General dichotomy results, however, cannot be proved using only algorithms of 
a single `pure' type. In all such results, see, e.g.\ \cite{Barto11:conservative,%
Bulatov06:3-element,Bulatov11:conservative,Bulatov16:conservative} a certain
mix of the two types of algorithms is needed. In some cases, for instance, 
\cite{Bulatov06:3-element} such a hybrid algorithm is somewhat ad hoc; 
in other cases, 
\cite{Barto11:conservative,Bulatov11:conservative,Bulatov16:conservative} 
it is based on intricate decompositions of the problem instance. It has become clear 
however that ad hoc hybridization and the decomposition techniques 
developed in the mentioned works are not sufficient. Therefore trying
to identify new polynomial time solvable cases of the CSP through combining
the two types of algorithms is the key to approaching the Dichotomy Conjecture.
There have been several further attempts to design hybrid algorithms; 
however, most of them were not quite successful. In more successful 
cases such as
\cite{Markovic09:block,Maroti11:Malcev,Maroti10:tree,Payne16:product} 
the researchers tried to tackle somewhat limited cases, in which a 
combination of local consistency properties and Gaussian elimination 
type fragments is very explicit. To provide the context for 
our results we explain those cases in details.

Suppose that a constraint language $\Gm$ is such that it is possible to 
partition its domain $A$ into blocks with the property that the restriction
of $\CSP(\Gm)$ on each block of the partition can be solved by an
algorithm of one type; while if we collapse each block into a single element,
the resulting quotient problem can be solved by an algorithm of another type.
What can be said about $\CSP(\Gm)$ itself? For instance, consider constraint
language $\Gm=\{\rel\}$ on $A=\{0,1,2\}$ where the ternary relation
$\rel$ is given by (triples in $\rel$ are written vertically)
$$
\rel=\left(\begin{array}{cccc|cccc|cc|cc}
0&0&1&1&2&2&2&2&2&2&2&2\\
0&1&0&1&0&0&1&1&2&2&0&1\\
0&1&1&0&1&0&0&1&0&1&2&2
\end{array}\right).
$$
If $A$ is partitioned into $B=\{0,1\}$ and $C=\{2\}$, then the restriction 
of $\rel$ on the blocks $B,C$ is one of the relations above separated 
by vertical lines (we can choose between $B$ and $C$ for different 
coordinate positions), and the corresponding CSP can be solved by 
Gaussian elimination. Indeed, 
the only nontrivial relation obtained this way is the first one, that is, 
$\rel\cap B^3$, and it is given by a linear equation $x+y+z=0$. 
The quotient relation $\rel'$ then looks like 
$$
\rel'=\left(\begin{array}{cccc}
B&C&C&C\\ B&B&C&B\\ B&B&B&C
\end{array}\right),
$$
and it follows from \cite{Schaefer78:complexity} that $\CSP(\rel')$ can be
solved by a local propagation algorithm, as $\rel'$ can be represented by a
Horn clause. Solving $\CSP(\Gm)$ is less easy, see, \cite{Bulatov06:3-element},
and similar but more complicated cases have not been known to be
polynomial time solvable until now.

To make constructions like the one above more precise we use the algebraic 
representation of nonuniform CSPs, in which a 
constraint language is replaced with its (universal) algebra of polymorphisms. 
This allows us to exploit structural properties of algebras to design 
a hybrid algorithm. So, starting from $\CSP(\Gm)$, where $\Gm$ is a 
constraint language on a set $A$, we first consider the corresponding 
algebra $\zA_\Gm$ with base set $A$ such $\CSP(\zA_\Gm)$ is polynomial 
time reducible to $\CSP(\Gm)$. A partition of $\zA_\Gm$ is given by a 
congruence of $\zA_\Gm$, that is, an invariant equivalence relation. Recall that 
due to the results of \cite{Bulatov05:classifying} the algebra $\zA_\Gm$ can be 
assumed idempotent, this makes restrictions on congruence blocks possible.
Now, suppose that an idempotent algebra $\zA$ is such that it has a congruence 
$\sg$ with the property that the CSP of its quotient $\zA\fac\sg$ can be 
solved by the small generating set algorithm, say, it is Mal'tsev, while for 
every $\sg$-block $\zB$ (a subalgebra of $\zA$) the CSP over $\zB$ can be 
solved by a local propagation algorithm; or the other way round, see 
Figure~\ref{fig:blocks}. 
How can one solve the CSP over $\zA$ itself? Maroti in \cite{Maroti11:Malcev} 
considered the first case, when $\zA\fac\sg$ can be solved 
by the small generating set algorithm. This case turns 
out to be easier because of the property of the $\sg$-blocks we can exploit. 
Suppose for simplicity that every $\sg$-block $\zB$ is a semilattice, as 
shown in Figure~\ref{fig:blocks}. Then 
every CSP instance on $\zB$ has some sort of a canonical solution that assigns
the maximal element of the semilattice (that is element $a\in\zB$ such that 
$ab=a$ for all $b\in\zB$) to every variable. It then can be shown that if we find a solution 
$\vf:V\to\zA\fac\sg$ where $V$ is the set of variables of the instance on  
$\zA\fac\sg$, and then assign the maximal elements of the 
$\sg$-block $\vf(v)$ to $v$, we obtain a solution of the original instance. 

\begin{figure}[h]
\centerline{\includegraphics[scale=0.6]{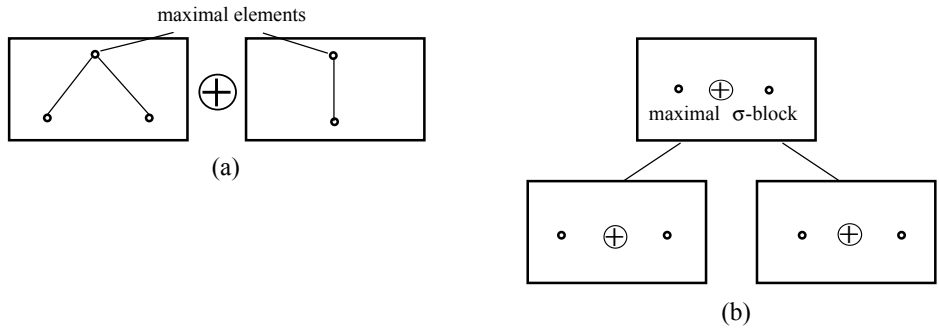}}
\caption{(a) Algebra $\zA$ such that $\zA\fac\sg$ is Mal'tsev; (b) an SBM
algebra. 
Rectangles represent $\sg$-blocks, dots represent elements,
lines show the semilattice structure, and $\oplus$ represents a Mal'tsev 
operation acting on elements or $\sg$-blocks.}\label{fig:blocks}
\end{figure}

The case when $\zA\fac\sg$ is a semilattice, while every $\sg$-block is Mal'tsev 
is much more difficult. We will call such algebras \emph{semilattice block 
Mal'tsev} algebras (SBM algebras, for short). More precisely,
we consider idempotent algebras $\zA$ with the following property: 
There are a binary operation $f$ and a ternary operation $m$, and a congruence 
$\sg$ of $\zA$ such that $\zA\fac\sg$ is a semilattice with a semilattice 
operation $f$, and every $\sg$-block $B$ is a Mal'tsev algebra with Mal'tsev 
operation $m$, and $f\red B$ is a projection. The main difficulty with this 
kind of algebras is that the only solution of a CSP over a semilattice we can 
reliably find is the canonical one assigning the maximal available element 
to each variable. Finding 
a second solution is already hard. On the other hand, 
if we restrict our instance only to the maximal $\sg$-block $\zB$, it may have no 
solution there, even though the original instance has a solution, which 
simply does not belong to the maximal block. If this is the case, it has been 
unclear for nearly 10 years how  the domain can be reduced so that the maximal 
block is eliminated.

The problem has been resolved in some special cases. Firstly, Maroti in 
\cite{Maroti10:tree} showed that it suffices to consider SBM algebras of a certain
restricted type. We will use this result in this paper. Marcovi\'c and McKenzie 
suggested an algorithm that solves 
the CSP over SBM algebras $\zA$ when $\zA\fac\sg$ is a chain, that is, 
$ab\in\{a,b\}$ 
for any $a,b\in\zA\fac\sg$. In this case their algorithm is capable of eliminating 
the maximal block using the fact that if a semilattice is a chain, any 
of its subsets is a 
subalgebra. Finally, very recently Payne in \cite{Payne16:product} suggested an 
algorithm that works for a more general class of algebras than SBM, but algebras 
in this class have to satisfy an extra condition that in SBM algebras manifests 
itself as the existence of certain well behaving mappings between 
$\sg$-blocks. In particular, this condition guarantees that the instance 
restricted to the maximal $\sg$-block has a solution whenever the original 
problem has a solution.

In this paper we continue the effort started in 
\cite{Markovic09:block,Maroti10:tree,Payne16:product} and
present an algorithm that solves the CSP over an arbitrary SBM algebra.

\begin{theorem}\label{the:main}
If $\zA$ is a SBM algebra then $\CSP(\zA)$ is solvable in polynomial time.
\end{theorem}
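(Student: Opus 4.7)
The plan is to combine constraint propagation on the semilattice quotient with the small-generating-set algorithm on the Mal'tsev blocks, coupled with an iterative domain-reduction loop that handles the case where the naive ``maximal block'' assignment on the quotient fails to extend. Before anything else I would apply Maroti's reduction \cite{Maroti10:tree} to restrict attention to SBM algebras of the special normal form identified there, and then enforce a sufficiently strong level of local consistency on the input instance~$\cP$; both steps are preserved under subsequent reductions of the domains.

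The easy half is the quotient. Let $\cP\fac\sg$ be the instance obtained by projecting every constraint relation of $\cP$ to $\zA\fac\sg$. Since $\zA\fac\sg$ is a semilattice under $f$, constraint propagation yields a canonical solution $\vf^*$ that assigns to every variable $v$ the maximal $\sg$-block $\zB_v$ consistent with $\cP\fac\sg$. Restricting $\cP$ to the target blocks $\zB_v$ produces an instance $\cP^*$ over a disjoint union of Mal'tsev algebras, to which the small-generating-set algorithm of \cite{Bulatov06:simple,Idziak10:few} applies. If $\cP^*$ is satisfiable, any solution it returns is automatically a solution of $\cP$, since $\vf^*$ is a valid block assignment on the quotient.

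The main obstacle is that $\cP^*$ may be infeasible while $\cP$ itself has solutions — solutions that necessarily use at least one non-maximal block. I would design a controlled reduction step that, given such a failure, locates and deletes an element $a$ of some maximal block $\zB_v$ that provably lies in no solution of $\cP$, and then restarts the procedure on the smaller domain. The technique I would pursue is to use the failure certificate produced by the Mal'tsev algorithm — an incompatibility between partial assignments across some subset of variables — and transport it through the semilattice operation $f$; the key point is that $f$ applied to a maximal-block element on one side and a lower-block element on the other lands in a strictly smaller block, so the witness tuple can be combined with an alternative ``lower'' assignment to produce a contradiction pinpointing a removable $a$. Making this transport argument work for an \emph{arbitrary} SBM algebra (not just the chain case of \cite{Markovic09:block} or the well-behaved-maps case of \cite{Payne16:product}) is precisely the step that has blocked prior approaches; I would expect to exploit Maroti's normal form so that the witness is structurally regular enough for the extraction of a safely removable $a$ to go through unconditionally.

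Termination is clear, since every outer iteration either returns a solution or strictly reduces the total size of the domains, and each inner step (local consistency, semilattice propagation, Mal'tsev generating-set computation) runs in polynomial time; correctness reduces to soundness of the deletions together with the lifting observation above. The hard part of the proof, in my view, is the unconditional construction of the removable element from a Mal'tsev infeasibility witness in a general (non-chain) semilattice quotient.
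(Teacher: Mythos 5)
Your outer architecture --- solve the semilattice quotient by propagation, restrict to the maximal $\sg$-blocks, run the Mal'tsev algorithm there, and on failure delete a provably useless element of a maximal block and restart --- is exactly the strategy the paper describes as having been stuck for nearly ten years, and the step you defer (``locate a removable $a$ from the Mal'tsev infeasibility witness and transport it through $f$'') is the entire content of the problem, not a detail. There are concrete reasons this step does not go through as sketched. First, the small-generating-set algorithm does not return a structured ``incompatibility certificate'' that can be acted on by the semilattice operation; it simply fails to produce a compact representation containing a solution. Second, even granting that some element $a$ of a maximal block lies in no solution, $\zB_v\setminus\{a\}$ need not be a subalgebra (it need not be closed under $\mal$), so the reduced instance leaves $\CSP(\gA)$; and determining the full set of extendable elements of a domain is as hard as the original problem. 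Third, applying $f$ to a maximal-block element and a lower-block element lands in the \emph{maximal} block (since $\max(\zA\fac\sg)$ is absorbing), not a strictly smaller one, so the ``transport downward'' mechanism is backwards. Your treatment of the easy half is fine (restriction to $\max$-blocks is sound by Lemma~\ref{lem:irreducible-maximal}), but the hard half is asserted rather than proved.

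The paper takes a genuinely different route that avoids element deletion altogether. It introduces \emph{block-minimality}: prime intervals $(\al,\beta)$ below $\th_v$ in the congruence lattices of the domains are grouped by the equivalence ``cannot be separated by a unary polynomial,'' yielding coherent sets $W_{v\al\beta}$ of variables, and the instance is required to be minimal on each such set. Establishing this is feasible because each coherent subproblem either lives over Mal'tsev domains or, by the split/alignment analysis (Lemmas~\ref{lem:align-coherence} and~\ref{lem:coherent-link-partition}), admits a link partition into disjoint instances over strictly smaller domains, enabling bounded-depth recursion (Proposition~\ref{pro:to-block-minimal}). Satisfiability of a block-minimal instance is then proved by an induction on a tuple of congruences $\ov\gm$ descending from $\ov\th$ to equality, maintaining a $\ov\gm$-ensemble of mutually compatible solutions of the coherent subproblems and refining it one prime quotient at a time using $(\al,\beta)$-collapsing polynomials (Theorem~\ref{the:block-minimal} and Lemma~\ref{lem:inductive}). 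None of this machinery is present or foreshadowed in your proposal, so as written the proposal has a genuine gap at its central step.
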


The algorithm is based upon a new local consistency notion that we call
\emph{block-minimality} (although in our case it is necessarily not quite 
local, since it has to deal with Mal'tsev algebras). More specifically, our 
algorithm first separates the set $V$ of variables of a CSP instance into 
overlapping subsets, coherent sets, and considers subproblems on these 
sets of variables. 
For block-minimality these subproblems have to be minimal, that is, every 
tuple from every constraint relation has to be a part
of a solution. This can be achieved by solving the problem many times with 
additional constraints. However, this is not very straightforward, because 
coherent sets may contain all the variables from $V$. To overcome this 
problem we show that the subproblems
restricted to coherent sets are either over a Mal'tsev domain and therefore can 
be solved efficiently, or they split up into a collection of disjoint instances,
each of which has a strictly smaller domain. In the latter case we can recurse
on these smaller instances.
Finally, we prove that any block-minimal instance has a solution.

The results of this paper can easily be made more general by removing some of 
the restrictions on the basic operations of SBM algebras. However, we hope that 
these results can be generalized well beyond SBM-like algebras and so 
we stop short of giving more general but also more technically involved 
proofs just restricting ourselves to demonstrating the general idea.

In Section~\ref{sec:prelims} we recall the basic definitions on CSP and the 
algebraic approach. A somewhat simplified outline of the solution 
algorithm and block-minimality is given in Section~\ref{sec:outline}.
More advanced facts from algebra and a study of certain 
properties of SBM algebras are given in Section~\ref{sec:SBM-minimal}.
In Section~\ref{sec:coherent} we
strengthen the results of \cite{Bulatov02:maltsev-3-element} about the 
structure of relations over Mal'tsev algebras and extend them to SBM 
algebras\footnote{Kearnes and Szendrei in \cite{Kearnes12:clones} developed
a technique based on so-called critical relations that resembles in certain 
aspects what can be achieved through coherent sets. However, 
\cite{Kearnes12:clones} only concerns congruence modular algebras, and so
cannot be used for SBM algebras.}. In Section~\ref{sec:to-instances} we 
extend these notions to CSP instances. Finally, in Section~\ref{sec:algorithm} 
we prove the main result and present a solution algorithm.

\section{Preliminaries}\label{sec:prelims}

\subsection{Multisorted Constraint Satisfaction Problem}

By $[n]$ we denote the set $\{1\zd n\}$. Let $\vc An$ be finite sets. Tuples from
$A_1\tm\ldots\tm A_n$ are denoted in boldface, say, $\ba$, and their entries by
$\ba[1]\zd\ba[n]$.  A \emph{relation} $\rel$ over $\vc An$ is a subset of 
$A_1\tm\dots\tm A_n$. We refer to $n$ as the \emph{arity} of the tuple $\ba$ 
and the relation $\rel$. Let $I=(\vc ik)$ be 
an (ordered) multiset, a subset of $[n]$. Then let $\pr_I\ba=(\ba[i_1]\zd\ba[i_k])$ 
and $\pr_I\rel=\{\pr_I\ba\mid\ba\in\rel\}$. Relation $\rel$ is said to be a 
\emph{subdirect product} of $\vc An$ if $\pr_i\rel=A_i$ for $i\in [n]$. In some cases
it will be convenient to consider tuples and relations whose entries are indexed by 
sets other than subsets of $[n]$, most often those will be sets of variables. Then we either 
assume the index set is somehow ordered, or consider tuples as functions from
the index set to the domain and relations as sets of such functions.

Let $\cA$ be a set of sets, in this paper $\cA$ is usually the set of universes of
finite algebras derived from an SBM algebra; we clarify `derived' later.
An instance of a \emph{(Multisorted) Constraint Satisfaction Problem} (CSP) 
over $\cA$ is given by $\cP=(V,\cA,\cC)$, where $V$ is a set of 
\emph{variables}, $\cA$ is a collection of \emph{domains} $A_v\in\cA$, and 
$\cC$ is a set of \emph{constraints}; every 
constraint $\ang{\bs,\rel}$ is a pair consisting of an ordered multiset 
$\bs=(\vc vk)$, a subset of $V$, called the \emph{constraint scope} and 
$\rel$, a  relation over $A_{v_1}\zd A_{v_k}$, called the 
\emph{constraint relation}.

\subsection{Algebraic structure of the CSP}

For a detailed introduction to CSP and the algebraic approach to its structure
the reader is referred to a very recent and very nice survey by Barto et al.\
\cite{Barto17:polymorphisms}. Basics of universal algebra can be learned from
the textbook \cite{Burris81:universal} and monograph \cite{Hobby88:structure}.

A (\emph{universal}) \emph{algebra} is a pair $\zA=(A;F)$, where $A$ is a set
(always finite in this paper) called the \emph{universe} of $\zA$, and $F$ is a 
set of \emph{basic operations}, multi-ary operations on $A$. Algebras 
$\zA=(A,F^\zA)$ and $\zB=(B,F^\zB)$ are said to be \emph{similar} 
if their basic operations are indexed by elements of the same set $F$ in such a 
way that operations from $F^\zA$ and $F^\zB$ indexed by the same element 
have the same arity. Operations that can be obtained from the basic operations 
of $\zA$ or a class $\gA$ of similar algebras by means of
compositions are said to be \emph{term} operations of $\zA$ or, respectively,
$\gA$.

The CSP is related to algebras through the notion of polymorphism. Let 
$\rel$ be a relation on a set $A$ and $f$ is a $k$-ary operation on the same 
set. Operation $f$ is said to be a \emph{polymorphism} of $\rel$ if for any
$\vc\ba k\in\rel$ the tuple $f(\vc\ba k)$ also belongs to $\rel$. More 
generally, let $\rel$ be a subset of $A_1\tms A_\ell$ and $f$ is an operation
symbol such that $f^{\zA_i}$ is a $k$-ary operation on $A_i$ for $i\in[\ell]$. 
Then $f$ is a polymorphism of $\rel$ if for any $\vc\ba k\in\rel$ the tuple
$f(\vc\ba k)$ belongs to $\rel$, where 
$f(\vc\ba k)=
(f^{\zA_1}(\ba_1[1]\zd\ba_k[1])\zd f^{\zA_\ell}(\ba_1[\ell]\zd\ba_k[\ell]))$.
Let $\Gm$ be a constraint language on a set $A$. Then $\Pol(\Gm)$ denotes 
the set of all operations $f$ on $A$ such that $f$ is a polymorphism of 
every relation from $\Gm$; also $\zA_\Gm=(A,\Pol(\Gm))$ is the corresponding
algebra. Similarly, let $\cA$ be a collection of sets and $\Gm$ a constraint
language over $\cA$, that is, a set of relations $\rel\sse A_1\tms A_\ell$,
$\vc A\ell\in\cA$. Then $F=\Pol(\Gm)$ is the set of all operation symbols $f$
along with their interpretations on sets from $\cA$ such that $f$ is a 
polymorphism of all relations from $\Gm$. The corresponding set of 
algebras is denoted by $\gA_\Gm$, that is, for every $A\in\cA$ the set
$\gA_\Gm$ contains algebra $\zA=(A,F^\zA)$, where 
$F^\zA=\{f^\zA\mid f\in F\}$. 

Any class of similar algebras also gives rise to a CSP. Let $\gA$ be a 
class of similar finite algebras and $\cA$ the set of 
universes of algebras from $\gA$. Then $\CSP(\gA)$ is the class of instances 
$(V,\cA,\cC)$ of CSPs over $\cA$ such that every constraint relation $\rel$ 
from $\ang{\bs,\rel}\in\cC$, $\bs=(\vc vk)$, 
is a subalgebra of $A_{v_1}\tm\dots\tm A_{v_k}$, where $A_v$, $v\in V$, 
are viewed as algebras from $\gA$. 

In this paper we will use two special types of operations.

\begin{example}\label{exa:semilattice}\rm
A binary operation $f$ on $A$ is said to be \emph{semilattice} if 
$f(a,a)=a$, $f(a,b)=f(b,a)$, and $f(f(a,b),c)=f(a,f(b,c))$ for any 
$a,b,c\in A$. Similarly, $f$ is a semilattice operation on a class $\gA$
of similar algebras, if it is a term operation of that class and 
$f^\zA$ is a semilattice operation 
for every $\zA\in\gA$. We will treat a semilattice operation as multiplication
and denote it by $\cdot$ or omit the sign altogether. A semilattice operation 
defines an order on its domain: $a\le b$ if and only if $ab=b$. This means
that there is always the greatest element of such a semilattice order ---
the product of all the elements of $A$. We will denote this element by 
$\max(\zA)$.
\end{example}

\begin{example}\label{exa:maltsev}\rm
A ternary operation $\mal$ is said to be \emph{Mal'tsev} if it satisfies
the equations $\mal(a,b,b)=\mal(b,b,a)=a$ for any $a,b\in A$. A term 
operation $\mal$ of a class $\gA$ is Mal'tsev if $\mal^\zA$ is Mal'tsev for
every $\zA\in\gA$. An algebra with a Mal'tsev term operation is said to be
\emph{Mal'tsev}.

If $\gA$ has a Mal'tsev term operation, the algorithm from 
\cite{Bulatov06:simple} constructs a compact representation of the set
of solutions of any instance from $\CSP(\gA)$, thus solving the problem in 
polynomial time.
\end{example}

A \emph{subalgebra} of an algebra $\zA=(A,F)$ is a subset $B\sse A$ 
equipped with the restrictions of operations from $F$ on $B$ and such that 
$f(\vc ak)\in B$ for every $f\in F$ and $\vc ak\in B$.
An equivalence relation on $A$ invariant with respect to the basic operations
of $\zA$ is said to be a \emph{congruence} of $\zA$. If $a,b$ are related by 
a congruence $\al$, we write $a\eqc\al b$; the $\al$-block containing $a$ 
is denoted $a^\al$.  The \emph{quotient algebra} $\zA\fac\al$ has the 
universe $A\fac\al$ and basic operations $f^\al$, $f\in F$, such that 
for any $\vc ak\in A$ operation $f^\al$ is given by
$f^\al(a_1^\al\zd a_k^\al)=(f(\vc ak))^\al$. We will omit the superscript in 
$f^\al$ whenever this does not lead to a confusion.
Algebra $\zA$ is said to be \emph{idempotent} if $f(a\zd a)=a$ for any
$f\in F$ and any $a\in A$. A useful property of idempotent algebras is
that every class of any of its congruences is a subalgebra. In particular,
every 1-element subset of $A$ is a subalgebra.
Algebras $\zA,\zA'$ with the same universe are called \emph{term equivalent}
if they have the same set of term operations. If $\zA=(A,F)$, $\zA'=(A,F')$
and $F'$ is a subset of the set of term operations of $\zA$, then $\zA'$
is said to be a \emph{reduct} of $\zA$.

Idempotent algebra $\zA$ is said to be \emph{semilattice block Mal'tsev} 
if there are a binary term operation $f$ and a ternary term operation $\mal$, 
and a congruence $\sg$ of $\zA$ such that $\zA\fac\sg$ is term equivalent 
to a semilattice with a semilattice operation $f$, operation $\mal$ is a Mal'tsev 
operation on every $\sg$-block $B$, and $f\red B$ is a projection, that is,
$f\red B(x,y)=x$. 

\subsection{Partial solutions and local consistency}

Let $\cP=(V,\cA,\cC)$ be a CSP instance
Let $W\sse V$. By $\cP_W$ we denote the instance $(W,\cA^W,\cC^W)$ 
defined as follows: $A^W_v=A_v$ for each $v\in W$; for every constraint 
$C=\ang{\bs,\rel}$, $C\in\cC$, the set $\cC^W$ includes the constraint 
$C^W=\ang{\bs',\rel'}$, where $\bs'=\bs\cap W$ and $\rel'=\pr_{\bs'}\rel$. 
A solution of $\cP_W$ is called a \emph{partial solution} of $\cP$ on $W$. 
The set of all such solutions is denoted by $\cS_W$. If $W=\{v\}$ or 
$W=\{u,v\}$, we simplify notation to $\cP_v,\cS_v$ and 
$\cP_{uv},\cS_{uv}$, respectively.

Instance $\cP$ is called \emph{minimal} if every tuple $\ba\in\rel$ for any 
constraint $\ang{\bs,\rel}\in\cC$ can be extended to a solution of $\cP$; 
that is, there is $\vf\in\cS$ such that $\vf(v)=\ba[v]$ for $v\in\bs$. Instance 
$\cP$ is called \emph{$k$-minimal} if $\cP_W$ is minimal for all $k$-element 
$W\sse V$. For any fixed $k$ every instance can be reduced to a $k$-minimal 
instance in polynomial time by a standard algorithm \cite{Bulatov08:dualities}: 
cycle over all $k$ element subsets
$W\sse V$, solve the problem $\cP_W$, and for every constraint $\ang{\bs,\rel}$
exclude from $\rel$ all tuples inconsistent with $\cS_W$. If $\cP\in\CSP(\gA)$ 
for some class $\gA$ of similar algebras closed under subalgebras, 
the resulting problem also belongs to $\CSP(\gA)$.
In particular, from now on we will assume that all the instances we deal with are 
1-minimal. For such problems we can also \emph{tighten} the instance reducing 
the domains $A_v$, $v\in  V$, to the sets $\cS_v$. Every constraint relation will 
therefore be assumed to be a subdirect product of the respective domains.
If $\gA$ consists of idempotent algebras, then any problem from $\CSP(\gA)$ can be 
reduced to a minimal one by solving polynomially many instances of $\CSP(\gA)$. 
First of all, \emph{constant relations}, $\rel_a=\{(a)\}$, $a\in\zA\in\gA$, are subalgebras
of $\zA$ and therefore can be used in constraints. Then the algorithm proceeds as
follows: cycle over all constraints $C=\ang{\bs,\rel}\in\cC$ and all $\ba\in\rel$; replace 
$C$ with the collection of unary constraints $\ang{(\bs[i]),\rel_{\ba[\bs[i]]}}$;
solve the resulting instance $\cP_{C,\ba}$; remove $\ba$ from 
$\rel$ if $\cP_{C,\ba}$ has no solutions. However, this procedure obviously
amounts to solving instances from $\CSP(\gA)$, and therefore there is no 
guarantee this can be done in polynomial time.

\begin{example}\label{exa:semilattice-minimal}\rm
If a class $\gA$ of similar algebras has a semilattice term operation 
then $\CSP(\gA)$ can
be solved by establishing 1-minimality. More precisely, if
$\cP=(V,\cA,\cC)$ is a 1-minimal instance from $\CSP(\gA)$, where
$A_v$ is the domain of $v\in V$, then the mapping $\vf(v)=\max(A_v)$
is a solution of $\cP$.
\end{example}

\subsection{Congruences and polynomials}\label{sec:congruences}

The set (lattice) of congruences of an algebra $\zA$ will be denoted by 
$\Con(\zA)$. So, $\Con(\zA)$ is equipped with two binary operations 
of \emph{join}, $\join$, and \emph{meet}, $\meet$. The smallest 
congruence of $\zA$, the equality relation, is denoted by $\zz_\zA$, and
the greatest congruence, the total relation, is denoted by $\zo_\zA$.   
Let $\rel$ be
a subdirect product of $\vc\zA k$, and $\al_i\in\Con(\zA_i)$, $i\in [k]$. Then by
$\ov\al_\rel$, or simply $\ov\al$ if $\rel$ is clear from the context, we denote the
congruence $\al_1\tm\dots\tm\al_k$ of $\rel$ given by $\ba\eqc{\ov\al}\bb$ if
and only if $\ba[i]\eqc{\al_i}\bb[i]$ for all $i\in[k]$. Also, if 
$I=\{\vc i\ell\}\sse[k]$ then by $\ov\al_I$ we denote the congruence 
$\al_{i_1}\tm\dots\tm\al_{i_\ell}$ of $\pr_I\rel$.

Let $\cP=(V,\cA,\cC)$ be an instance of $\CSP(\gA)$ and $\al_v$ a congruence 
of $\zA_v\in\gA$ for each $v\in V$. 
By $\cP_{\ov\al}$ we denote the instance $(V,\cA^{\ov\al},\cC^{\ov\al})$, 
in which $\zA_v^{\ov\al}=\zA_v\fac{\al_v}$,
and a constraint $\ang{\bs,\rel'}$, $\bs=(\vc vk)$, belongs to $\cC^{\ov\al}$ 
if and only if a constraint $\ang{\bs,\rel}$, where 
$$
\rel'=\rel\fac{\ov\al}=\{\ba^{\ov\al}=
(\ba[1]^{\al_{v_1}}\zd\ba[k]^{\al_{v_k}})\mid \ba\in\rel\},
$$
belongs to $\cC$. 

A pair of congruences $\al,\beta\in\Con(\zA)$ is said to be a \emph{prime 
interval}, denoted $\al\prec\beta$, if $\al\le\beta$ and $\al<\gm<\beta$ 
for no congruence $\gm\in\Con(\zA)$. Then $\al\preceq\beta$ means that 
$\al\prec\beta$ or $\al=\beta$. For an operation $f$ on $\zA$ we write 
$f(\beta)\sse\al$ if, for any $a,b\in\zA$ with $a\eqc\beta b$, $f(a)\eqc\al f(b)$. 

Polynomials of $\zA$ are
formed from term operations as follows. Let\lb $f(\vc xk,\vc y\ell)$ be a term
operation of $\zA$ and $\vc a\ell\in\zA$. Then the operation 
$g(\vc xk)=f(\vc xk,\vc a\ell)$ is said to be a \emph{polynomial} of $\zA$.
Note that although a polynomial does not have to be a polymorphism
of invariant relations of $\zA$, unary polynomials and congruences of
$\zA$ are in a special relationship: an equivalence relation $\al$ is
a congruence of $\zA$ if and only if it is preserved by every unary 
polynomial $f$, that is, $f(\al)\sse\al$. 
As usual, by an \emph{idempotent
unary polynomial} we mean a polynomial $f(x)$ such that $f\circ f=f$ or,
equivalently, such that $f(x)=x$ for any $x$ from its range.

Let $\rel$ be a subdirect product
of $\vc\zA k$. Similar to tuples from $\rel$, polynomials of $\rel$ are also denoted in 
boldface, say,~$\bbf$. A polynomial $\bbf$ can be represented as $\bbf(\vc xk)=g(\vc
xk,\ba^1\zd\ba^\ell)$ where $g$ is a term operation of $\rel$ and
$\ba^1\zd\ba^l\in\rel$. Then the polynomial $g(\vc xk,\ba^1[i]\zd \ba^\ell[i])$
of $\zA_i$ is denoted by $f_i$, and for $I=\{\vc is\}\sse[n]$, $\bbf_I$ denotes
the polynomial $g(\vc xk,\pr_I\ba^1\zd\pr_I\ba^\ell)$ of $\pr_I\rel$. For any $i$, 
and any polynomial $f$ of $\zA_i$, there is a polynomial $\bg$ of $\rel$ such that 
$g_i=f$. We shall call $\bg$ an {\em extension} of $f$ to a polynomial of
$\rel$. Finally, for $I\sse[k]$, and $\ba\in\prod_{i\in I}\zA_i$ and 
$\bb\in\prod_{i\in[k]-I}\zA_i$, $(\ba,\bb)$ denotes the tuple $\bc$ such that
$\bc[i]=\ba[i]$ for $i\in I$ and $\bc[i]=\bb[i]$ if $i\in[k]-I$. To distinguish such
concatenation of tuples from pairs of tuples, we will denote pairs of tuples by 
$\ang{\ba,\bb}$. 

The proposition below lists the main basic properties of relations over Mal'tsev 
algebras.

\begin{prop}[Folklore]\label{pro:basics}
Let $\rel$ be a subdirect product of Mal'tsev algebras
$\zA_1\tms\zA_k$ and $I\sse[k]$. Then the following properties hold\\
(1) $\rel$ is {\em rectangular}, that is if
$\ba,\bb\in\pr_I\rel, \bc,\bd\in\pr_{[k]-I}\rel$ and
$(\ba,\bc),(\ba,\bd)$, $(\bb,\bc)\in\rel$, then $(\bb,\bd)\in\rel$.\\
(2) The relation $\nu_I=\{\ang{\ba,\bb}\in(\pr_I\rel)^2\mid$ there
is $\bc\in\pr_{[k]-I}\rel$ such that $(\ba,\bc)$, $(\bb,\bc)\in\rel\}$ is a congruence 
of $\pr_I\rel$.
\end{prop}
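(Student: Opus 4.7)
My plan for Part (1) is a direct one-line application of the Mal'tsev identity. Given $(\ba,\bc),(\ba,\bd),(\bb,\bc)\in\rel$, I apply the Mal'tsev term $\mal$ coordinate-wise to the triple $(\bb,\bc),(\ba,\bc),(\ba,\bd)$, all of which lie in $\rel$. Since $\rel$ is a subalgebra, and hence preserved by $\mal$, the resulting tuple is in $\rel$. On the $I$-coordinates it is $\mal(\bb,\ba,\ba)=\bb$, and on the $([k]-I)$-coordinates it is $\mal(\bc,\bc,\bd)=\bd$, so $(\bb,\bd)\in\rel$ as required.

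For Part (2) I would verify in turn that $\nu_I$ is reflexive, symmetric, transitive, and compatible with every term operation of $\pr_I\rel$. Reflexivity is immediate: every $\ba\in\pr_I\rel$ has \emph{some} witness $\bc\in\pr_{[k]-I}\rel$ with $(\ba,\bc)\in\rel$, so $\langle\ba,\ba\rangle\in\nu_I$. Symmetry is built into the definition. The transitivity step is the only place where Part (1) is really needed: given $\langle\ba,\bb\rangle,\langle\bb,\be\rangle\in\nu_I$ with respective witnesses $\bc,\bc'$, the tuples $(\ba,\bc),(\bb,\bc),(\bb,\bc')$ all lie in $\rel$, so rectangularity gives $(\ba,\bc')\in\rel$. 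Combined with $(\be,\bc')\in\rel$, the common witness $\bc'$ shows $\langle\ba,\be\rangle\in\nu_I$.

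Finally, to see that $\nu_I$ is compatible with term operations of $\pr_I\rel$, let $t$ be $n$-ary and pick $\langle\ba_j,\bb_j\rangle\in\nu_I$ with witnesses $\bc_j$ for $j\in[n]$. Because $\rel$ is a subalgebra, coordinate-wise application of $t$ yields
\[
t\bigl((\ba_1,\bc_1),\ldots,(\ba_n,\bc_n)\bigr),\;
t\bigl((\bb_1,\bc_1),\ldots,(\bb_n,\bc_n)\bigr)\in\rel,
\]
whose $([k]-I)$-projections both equal $t(\bc_1,\ldots,\bc_n)$. Thus this common projection witnesses that $\langle t(\ba_1,\ldots,\ba_n),\,t(\bb_1,\ldots,\bb_n)\rangle\in\nu_I$, finishing the proof that $\nu_I$ is a congruence of $\pr_I\rel$.

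Since this is folklore, I do not expect any genuine obstacle; the entire argument rests on the Mal'tsev identities $\mal(a,b,b)=\mal(b,b,a)=a$ together with the fact that $\rel$ and $\pr_I\rel$ inherit every term operation from the product algebra. The only structural care needed is the order of presentation: Part (1) must precede transitivity in Part (2), since rectangularity is exactly what turns the \emph{existence} of two separate witnesses $\bc$ and $\bc'$ into a \emph{common} witness.
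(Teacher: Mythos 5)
Your proof is correct and is exactly the standard folklore argument (the paper itself states this proposition without proof, so there is nothing to diverge from): part (1) is the coordinate-wise application of the Mal'tsev identities to the three given tuples, and part (2) follows from subdirectness (reflexivity), the definition (symmetry), rectangularity (transitivity), and closure of $\rel$ under term operations (compatibility). The one point you rightly flag --- that rectangularity is what converts two separate witnesses into a common one in the transitivity step --- is handled correctly.
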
 

\section{Outline of the algorithm}\label{sec:outline}

Our solution algorithm works by establishing some sort of minimality condition
and repeatedly alternates two phases. The first phase is based on the results of
Maroti \cite{Maroti10:tree} that allow us to reduce an instance over
SBM algebras to one over SBM algebras with a \emph{minimal} element.
If $\zA$ is an SBM algebra then there is a congruence $\sg$ such that 
$\zA\fac\sg$ is a semilattice. This means that $\zA\fac\sg$ has a maximal or 
\emph{absorbing} element $a$ such that $ax=xa=a$ for any $x\in\zA\fac\sg$. 
This
element will be in the focus of our argument. We will also show with help of 
\cite{Maroti10:tree}, Corollary~\ref{cor:maroti}, that it can always be 
assumed that $\zA\fac\sg$ 
has a minimal or \emph{neutral} element $b$ such that $bx=xb=x$ for
any $x\in\zA\fac\sg$. In fact, one can assume an even stronger condition:
that $b$ is a 1-element $\sg$-block.

For the second phase we introduce the \emph{block-minimality} condition
defined with the help of congruences and polynomials of an algebra.
Let $\rel$ be a subdirect product of $\zA_1\tm\dots\tm\zA_n$ and 
$\al,\beta\in\Con(\zA_i)$, $\gm,\dl\in\Con(\zA_j)$ such that $\al\prec\beta$,
$\gm\prec\dl$ for some $i,j\in[n]$. Interval $(\al,\beta)$ 
\emph{can be separated} 
from $(\gm,\dl)$ if there is a unary polynomial $\bbf$ of $\rel$ such 
that $f_i(\beta)\not\sse\al$ while $f_j(\dl)\sse\gm$.
We are mostly
interested in the situation when prime intervals cannot be separated.

Suppose that $\cP=(V,\cA,\cC)$ is a 3-minimal instance and the domain
$\zA_v$ of $v\in V$ is an SBM algebra and $\sg_v$ is such that 
$\zA_v\fac{\sg_v}$ is a semilattice. Let $\th_v$ denote the congruence 
of $\zA_v$ such that the maximal element of $\zA_v\fac{\sg_v}$ is one 
block of $\th_v$, and all other $\th_v$-blocks are singletons. We show,
Lemma~\ref{lem:irreducible-maximal}, that this is indeed a congruence. 
For every $v\in V$ and $\al,\beta\in\Con(\zA_v)$ with $\al\prec\beta\le\th_v$
let $W_{v\al\beta}\sse V$ denote the set 
of variables $w$ such that $(\al,\beta)$ and $(\gm,\dl)$ for some 
$\gm,\dl\in\Con(\zA_w)$ with $\gm\prec\dl\le\th_w$ cannot be separated from 
each other in the binary relation $\cS_{vw}$. We call such sets of variables
\emph{coherent sets}. Instance $\cP$ is said to be
\emph{block-minimal} if for every $v\in V$ and $\al,\beta\in\Con(\zA_v)$ with 
$\al\prec\beta\le\th_v$ the problem $\cP_{W_{v\al\beta}}$ is minimal.

The result now follows from the following two statements. First, 
Proposition~\ref{pro:to-block-minimal} claims that any instance $\cP$
over SBM algebras can be efficiently reduced to an equivalent 
block-minimal instance by solving polynomially many SBM instances
over domains of smaller size. The second statement, 
Theorem~\ref{the:block-minimal}, claims that any block-minimal SBM 
instance has a solution. 

The key to the proof of Proposition~\ref{pro:to-block-minimal} is 
Lemma~\ref{lem:coherent-link-partition} stating that
every problem $\cP_{W_{v\al\beta}}$ is a disjoint union of problems over smaller 
domains, or its domains are Mal'tsev algebras. More precisely, in the first case 
there is $k$ such that for every $w\in W_{v\al\beta}$
the domain $\zA_w$ can be partitioned into a disjoint union 
$\zA_w^{(1)}\cup\dots\cup\zA_w^{(k)}$ in such a way that for any
constraint $\ang{(\vc v\ell),\rel}$ of $\cP_{W_{v\al\beta}}$, every tuple
$\ba\in\rel$ belongs to $\zA_{v_1}^{(j)}\tms\zA_{v_k}^{(j)}$ 
for some $j\in[k]$. This property follows from the existence of a minimal 
element in every domain and the fact that certain prime intervals in 
congruence lattices of the domains of $\cP_{W_{v\al\beta}}$ cannot be
separated from each other, Lemma~\ref{lem:coherent-link-partition}. 
It means, of course, that it suffices to solve 
$k$ problems $\cP_{W_{v\al\beta}}^{(j)}$ whose domains are $\zA_w^{(j)}$.

We prove Theorem~\ref{the:block-minimal} by induction, showing that 
for every $\ov\beta=(\beta_v)_{v\in V}$ with 
$\beta_v\in\Con(\zA_v)$ with $\beta_v\le\th_v$ there is a collection of 
solutions $\vf_{v\al\beta}$ of $\cP_{W_{v\al\beta}}$ such that whenever 
$u\in W_{v\al\beta}\cap W_{w\gm\dl}$
we have $\vf_{v\al\beta}(u)\eqc{\beta_u}\vf_{w\gm\dl}(u)$. If every $\beta_w$ 
equals $\th_w$ then such a collection exists because 
the maximal element of $\zA_w\fac{\beta_w}$ is a singleton, and we
always can choose mappings $\vf_{v\al\beta}$ to be such that 
$\vf_{v\al\beta}(w)\fac{\th_v}$ is the maximal element. On the other
hand, if $\beta_w$ is the equality relation for every $w\in V$ then 
solutions $\vf_{v\al\beta}$ agree with each other and provide a solution of $\cP$. 
Thus, showing that the existence of solutions $\vf_{v\al\beta}$ for some 
$\ov\beta$ implies the existence of such solutions for smaller congruences 
$\ov\beta'$ is the crux of our argument.

\section{Semilattice block Mal'tsev algebras and minimal elements}%
\label{sec:SBM-minimal}

\subsection{Minimal sets and polynomials}\label{sec:minimal-set}

We will use several basic concepts of the tame congruence theory, 
\cite{Hobby88:structure}.

An \emph{$(\al,\beta)$-minimal set} is a minimal (under inclusion) set $U$ 
such that $U=f(\zA)$ for a unary polynomial of $\zA$ satisfying 
$f(\beta)\not\sse\al$. Sets $B,C$ are said
to be \emph{polynomially isomorphic}\index{polynomially isomorphic sets}
in $\zA$ if there are unary polynomials $f,g$ such that $f(B)=C$, $g(C)=B$, and
$f\circ g, g\circ f$ are identity mappings on $C$ and $B$, respectively.

\begin{lemma}[Theorem 2.8, \cite{Hobby88:structure}]%
\label{lem:minimal-sets}
Let $\al,\beta\in\Con(\zA)$, $\al\prec\beta$. Then the following hold.\\[1mm]
(1) Any $(\al,\beta)$-minimal sets $U,V$ are polynomially isomorphic.\\[1mm]
(2) For any $(\al,\beta)$-minimal set $U$ and any unary polynomial 
$f$, if
$f(\beta\red U)\not\sse\al$ then $f(U)$ is an $(\al,\beta)$-minimal set, $U$ 
and $f(U)$ are polynomially isomorphic,  and $f$ witnesses this fact.\\[1mm]
(3) For any $(\al,\beta)$-minimal set $U$ there is a unary polynomial 
$f$ such that
$f(\zA)=U$, $f(\beta)\not\sse\al$, and $f$ is idempotent, in particular, 
$f$ is the identity mapping on $U$.\\[1mm]
(4) For any unary polynomial $f$ such that $f(\beta)\not\sse\al$ there is an
$(\al,\beta)$-minimal set $U$ such that $f$ witnesses that $U$ and $f(U)$
are polynomially isomorphic.
\end{lemma}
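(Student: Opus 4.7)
The plan is to follow the standard tame congruence theory argument. Let $\mathcal{F}$ denote the collection of unary polynomials $f$ of $\zA$ with $f(\beta)\not\sse\al$; since $\al<\beta$, the identity lies in $\mathcal{F}$, so $\mathcal{F}$ is nonempty, and the $(\al,\beta)$-minimal sets are precisely the inclusion-minimal elements of $\{f(\zA):f\in\mathcal{F}\}$, which exist by finiteness of $\zA$. The natural order to tackle the four items is (3) first, then (2) and (4), and finally (1), each leaning on its predecessors.

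For (3), I would fix an $(\al,\beta)$-minimal set $U=f(\zA)$ with $f\in\mathcal{F}$ and iterate: the ranges $f(\zA)\supseteq f^2(\zA)\supseteq\cdots$ stabilize inside $U$, and on the stable set $f$ acts as a permutation, so some power $f^n$ is an idempotent unary polynomial. The crucial claim is that $f^i\in\mathcal{F}$ for every $i\ge 1$, which then forces the stable range to equal $U$ by minimality of $U$ (and forces $f^n$ to be the identity on $U$). I would prove this claim by contradiction: take the least $i$ with $f^i(\beta)\sse\al$ and extract from $f^{i-1}$ a pair in $(\beta\setminus\al)\cap U^2$ on which the next application of $f$ collapses into $\al$; combining this with the idempotent data lets one exhibit a polynomial in $\mathcal{F}$ with range properly contained in $U$, contradicting minimality. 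This interplay between the one-step failure of $\mathcal{F}$-membership and the strict shrinkage it creates inside the minimal set is the main obstacle, and it is where the argument really uses that $(\al,\beta)$ is a prime interval.

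With (3) in hand, the remaining items become formal. For (2), given the idempotent $e$ with range $U$ from (3) and a polynomial $f$ with $f(\beta\red U)\not\sse\al$, the composition $f\circ e$ lies in $\mathcal{F}$ and has range $f(U)$; the same permutation/iteration argument applied to a composition producing a retraction onto a minimal subset of $f(U)$ shows that $f(U)$ is itself $(\al,\beta)$-minimal, and then pasting $e$, $f$, the idempotent for $f(U)$, and their restrictions-to-range yields the required pair of mutually inverse polynomials, with $f$ itself witnessing the isomorphism. Item (4) follows by starting from an arbitrary $f\in\mathcal{F}$, shrinking the range to a minimal set $U\sse f(\zA)$ by repeated applications of (3) to compositions of $f$ with polynomials that witness failure of $\mathcal{F}$-membership on larger ranges, and then invoking (2) on $U$. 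Finally (1) is obtained by taking the two idempotents $e_U,e_V$ from (3) and checking that $e_V\circ e_U$ lies in $\mathcal{F}$ (by the same minimality-driven contradiction argument used in the proof of (3)); then (2) applied to $e_V\circ e_U$ maps $U$ polynomially isomorphically onto a minimal subset of $V$, which equals $V$ by minimality.
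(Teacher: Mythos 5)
First, a point of comparison: the paper does not prove this lemma at all --- it is quoted (for prime quotients) from Theorem~2.8 of Hobby and McKenzie, so there is no in-paper argument to measure yours against; what follows assesses your sketch against the standard proof you are reconstructing. Your outline has the right architecture (prove (3) first, derive (2) and (4), then (1)), but the single step that carries all the weight is missing. Everything hinges on an ``un-collapsing'' principle: if $g$ is a unary polynomial with $g(\beta)\not\sse\al$ and $(c,d)\in\beta\setminus\al$ is any fixed pair, then there is a unary polynomial $p$ with $(g(p(c)),g(p(d)))\notin\al$. This is precisely where $\al\prec\beta$ is used: primeness gives $\Cg{\al\cup\{(c,d)\}}=\beta$, and Mal'cev's description of principal congruence generation links any pair of $\beta$ to $(c,d)$ by a chain of $\al$-steps and polynomial images $\{p(c),p(d)\}$; applying $g$ to the chain and using transitivity of $\al$ yields the required $p$. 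Your sentence ``combining this with the idempotent data lets one exhibit a polynomial in $\mathcal F$ with range properly contained in $U$'' is exactly the point where this lemma is needed, and as written it is an assertion rather than an argument; I do not see how to complete the contradiction in (3), the range-shrinking in (4), or the construction of the inverse polynomial in (2) without it.

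Moreover, your argument for (1) rests on a claim that is false in general: $e_V\circ e_U$ need not lie in $\mathcal F$. Take $\zA=\zZ_2^2$ with basic operations $x+y+z$ and all $f_A(x,y)=Ax+(I-A)y$ for $2\times2$ matrices $A$ over $\zZ_2$; this is an idempotent Mal'tsev algebra whose unary polynomials are the affine maps $x\mapsto Ax+c$ and whose only congruences are $\zz$ and $\zo$, so $\al=\zz\prec\beta=\zo$. The $(\zz,\zo)$-minimal sets are the two-element cosets of one-dimensional subspaces. For $U=\{(0,0),(0,1)\}$ with $e_U(x,y)=(0,y)$ and $V=\{(0,0),(1,0)\}$ with $e_V(x,y)=(x,0)$, the composite $e_V\circ e_U$ is the constant map to $(0,0)$ and hence collapses $\beta$ into $\al$, even though $U$ and $V$ are polynomially isomorphic (via the coordinate swap). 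The correct proof of (1) does not compose the two idempotents directly; it inserts a correcting polynomial $p$ between them, supplied once again by the un-collapsing lemma. In short, your skeleton is the right one, but the load-bearing lemma is unstated, and in (1) it has been replaced by a step that fails.
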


Minimal sets of a Mal'tsev algebra form a particularly dense collection.

\begin{lemma}[Folklore]\label{lem:Maltsev-minimal-sets}
Let $\zA$ be a finite Mal'tsev algebra and $\al\prec\beta$ for 
$\al,\beta\in\Con(\zA)$. Then for any $a,b\in\zA$ with $(a,b)\in\beta-\al$,
there is an $(\al,\beta)$-minimal set $U$ such that $a^\al\cap U\ne\eps$ 
and $b^\al\cap U\ne\eps$.
\end{lemma}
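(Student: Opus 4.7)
The plan is to start from an arbitrary $(\al,\beta)$-minimal set $V_0$ and use the Mal'tsev operation to build a unary polynomial that translates $V_0$ so that its image meets both $a^\al$ and $b^\al$. By Lemma~\ref{lem:minimal-sets}(3), fix $V_0$ and an idempotent unary polynomial $f_0$ of $\zA$ with $f_0(\zA)=V_0$ and $f_0(\beta)\not\sse\al$; in particular there exist $u,v\in V_0$ with $(u,v)\in\beta-\al$. It suffices to exhibit a unary polynomial $g$ of $\zA$ with $g(u)\eqc\al a$ and $g(v)\eqc\al b$, for then $(u,v)\in\beta\red{V_0}$ witnesses $g(\beta\red{V_0})\not\sse\al$, so Lemma~\ref{lem:minimal-sets}(2) applied to $V_0$ and $g$ shows that $U=g(V_0)$ is an $(\al,\beta)$-minimal set, and by construction $g(u)\in a^\al\cap U$ and $g(v)\in b^\al\cap U$.

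To build $g$, consider the subalgebra $S$ of $\zA^2$ generated by the diagonal $\Dl=\{(x,x)\mid x\in\zA\}$ together with the single pair $(u,v)$. Every element of $S$ has the form $(p(u),p(v))$ for some unary polynomial $p$ of $\zA$, because any term $t(x,\vc ck)$ applied to $(u,v)$ and the diagonal elements $(c_1,c_1)\zd(c_k,c_k)$ evaluates componentwise as $(t(u,\vc ck),t(v,\vc ck))$. The key claim is that the image $\ov S$ of $S$ in $(\zA\fac\al)^2$ under the natural quotient map equals $\beta\fac\al$. On one hand $\ov S\sse\beta\fac\al$ since both generators of $S$ lie in $\beta$. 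On the other hand, $\ov S$ is a subalgebra of $(\zA\fac\al)^2$ containing the diagonal, and the Mal'tsev term forces every such subalgebra to be a congruence, via the identities
$$
\mal((y,y),(x,y),(x,x))=(y,x)\qquad\text{and}\qquad \mal((x,y),(y,y),(y,z))=(x,z),
$$
which yield symmetry and transitivity with the diagonal elements supplied for free. Since $\ov S$ strictly contains the diagonal of $\zA\fac\al$ (as $u^\al\ne v^\al$), and since $\al\prec\beta$ means $\beta\fac\al$ is an atom of $\Con(\zA\fac\al)$, one concludes $\ov S=\beta\fac\al$.

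Applied to the pair $(a^\al,b^\al)\in\beta\fac\al=\ov S$, this yields some $(x,y)\in S$ with $x\eqc\al a$ and $y\eqc\al b$; writing $(x,y)=(g(u),g(v))$ for the corresponding unary polynomial $g$ of $\zA$ completes the construction. The main obstacle, and the point at which the Mal'tsev hypothesis is used essentially, is showing that a subalgebra of $(\zA\fac\al)^2$ containing the diagonal is automatically a congruence; without a Mal'tsev term the subalgebra generated by the diagonal and a single off-diagonal pair need not be symmetric or transitive, and the whole approach would collapse.
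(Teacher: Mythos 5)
The paper states this lemma as folklore and offers no proof, so there is nothing internal to compare against; your argument stands on its own and it is correct. The reduction to finding a unary polynomial $g$ with $g(u)\eqc\al a$, $g(v)\eqc\al b$ is sound (Lemma~\ref{lem:minimal-sets}(2) then delivers the minimal set $g(V_0)$ meeting both $\al$-blocks), and the construction of $g$ via the subalgebra of $\zA^2$ generated by the diagonal together with $(u,v)$ is the standard Mal'tsev congruence-generation argument: the image $\ov S$ in $(\zA\fac\al)^2$ is a reflexive subalgebra, hence a congruence by the two Mal'tsev identities you display, and since it exceeds the diagonal and $\al\prec\beta$ makes $\beta\fac\al$ an atom of $\Con(\zA\fac\al)$, it must equal $\beta\fac\al$, yielding the required pair $(g(u),g(v))$ with $g(u)\eqc\al a$ and $g(v)\eqc\al b$. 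This is exactly the kind of proof the "Folklore" attribution presupposes.
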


\subsection{Semilattice block Mal'tsev algebras}\label{sec:block}

Since the fewer basic operations an algebra has, the richer the corresponding 
constraint language, we assume that the algebras we are dealing with have 
only two basic operations, just enough to guarantee the required properties. 
Therefore we assume that our semilattice block Mal'tsev 
algebras have only two basic operations: a binary operation $\cdot$ that 
we will often omit, and a ternary operation $\mal$ satisfying the conditions 
specified earlier. 
For elements $a,b\in\zA$ such that $ab=ba=b$ we write $a\le b$. 

\begin{lemma}\label{lem:dot-inequality}
Let $\zA$ be an SBM algebra. By choosing a reduct of $\zA$ we may assume 
that\\[1mm]
Operation $\cdot$ satisfies the equation $x(xy)=xy$; and for any 
$a,b\in\zA$, $a\le ab$.\\[1mm]
Operation $\mal$ can be chosen such that 
for any $a,b,c\in\zA$, $\mal(a,b,c)^{\sg_\zA}=(abc)^{\sg_\zA}$. 
\end{lemma}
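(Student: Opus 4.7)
The plan is to replace the two basic operations of $\zA$ by derived term operations $\cdot'$ and $\mal'$ so that, viewed as a reduct of $\zA$, the algebra is still SBM and in addition satisfies the two required identities. Since $\zA$ is finite, for every unary polynomial $h$ of $\zA$ there exists $N$ with $h^N\circ h^N = h^N$, and a single $N$ (for instance $N=|\zA|!$) works uniformly for all $h$. I would fix such an $N$ at the outset.

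First I would handle the binary operation. Writing $h_x(y)=xy$ for left multiplication by $x$, I would set $x\cdot'y = h_x^N(y)$, which is a term operation of $\zA$. The desired identity $x\cdot'(x\cdot'y)=x\cdot'y$, equivalent to $a\le' a\cdot'b$ in the induced order, is then immediate from $h_x^{2N}=h_x^N$. It remains to check that this iteration preserves the two SBM features of $\cdot$: on the quotient $\zA\fac\sg$ the semilattice identities collapse $h_x^N(y)$ back to $xy$, so $\cdot'$ induces the same semilattice; and on any $\sg$-block the first-projection identity $xy=x$ collapses $h_x^N(y)$ to $x$, so $\cdot'$ is still the first projection on blocks.

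Second, I would doctor $\mal$. The operation that $\mal$ induces on the semilattice quotient $\zA\fac\sg$ is a term of the semilattice, hence of the form $\prod_{i\in I}x_i$ for some nonempty $I\sse\{x,y,z\}$, and so lies at or below the full product $xyz$ in the semilattice order. I would therefore define
\[
\mal'(x,y,z) = \mal(x,y,z)\cdot'\bigl(x\cdot'(y\cdot'z)\bigr).
\]
On the quotient, the extra factor $xyz$ absorbs $\bar\mal(x,y,z)$, giving $\mal'(a,b,c)^\sg=(abc)^\sg$ for all $a,b,c\in\zA$. On any $\sg$-block, the iterated product $x\cdot'(y\cdot'z)$ collapses to $x$ by first-projection, and then $\mal(x,y,z)\cdot'x=\mal(x,y,z)$, so $\mal'$ restricts to the original $\mal$ on each block and in particular remains Mal'tsev there.

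The only point that genuinely needs care is the uniform choice of $N$ together with the simultaneous check that $N$-fold iteration preserves both the semilattice behavior on the quotient and the first-projection behavior on blocks; both survive because $\cdot$ is already in an idempotent regime in each case. Once $\cdot'$ and $\mal'$ are in place, passing to the reduct of $\zA$ whose basic operations are $\cdot'$ and $\mal'$ immediately yields all the claimed identities.
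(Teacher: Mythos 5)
Your treatment of the Mal'tsev operation is essentially the paper's: the paper sets $\mal'(x,y,z)=\mal(x,y,z)\cdot xyz$ and argues exactly as you do that the extra factor is absorbed on the quotient (since $\mal(a,b,c)^{\sg_\zA}$ lies in the subsemilattice generated by $a^{\sg_\zA},b^{\sg_\zA},c^{\sg_\zA}$) and disappears on blocks by first projection; your version differs only in using the already-modified product, which changes nothing. For the binary operation the routes genuinely diverge: the paper disposes of part (1) by citing Proposition~10 of \cite{Bulatov16:connectivity}, whereas you give a self-contained construction by iterating left multiplication $h_x(y)=xy$ to an idempotent power $h_x^N$. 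That construction is sound, correctly checks that iteration preserves both the semilattice behaviour on $\zA\fac{\sg_\zA}$ and first projection on blocks, and buys a proof from first principles where the paper outsources the work.

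One point in your write-up needs repair, though. Under the paper's definition, $a\le b$ means $ab=ba=b$, so the conclusion $a\le a\cdot'b$ is \emph{not} equivalent to the single identity $x\cdot'(x\cdot'y)=x\cdot'y$: you must also verify $(a\cdot'b)\cdot'a=a\cdot'b$, which does not follow formally from $h_x^{2N}=h_x^N$. Fortunately your construction does satisfy it, but it needs its own short argument: writing $c=a\cdot'b$, on the quotient $c^{\sg_\zA}\cdot a^{\sg_\zA}=c^{\sg_\zA}$ by absorption, so $ca$ lies in the same $\sg_\zA$-block as $c$; hence $c(ca)=c$ by first projection on that block, and therefore $h_c^N(a)=c$ for all $N\ge 2$. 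With that two-line addition (and the routine remark that $\sg_\zA$, being a congruence of $\zA$, is still a congruence of the reduct, so the quotient and block conditions remain meaningful), your argument is complete.
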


\begin{proof}
(1) Follows from Proposition~10 of \cite{Bulatov16:connectivity} .

(2) Consider the operation $\mal'(x,y,z)=\mal(x,y,z)xyz$. If $B$ is a 
$\sg_\zA$-block, then, since $ab=a$ for any $a,b\in B$, operation
$\mal'$ is Mal'tsev on $B$. Also, as $\zA\fac{\sg_\zA}$ is term equivalent to
a semilattice, $d=\mal(a,b,c)^{\sg_\zA}$ belongs to the subsemilattice 
of  $\zA\fac{\sg_\zA}$ generated by $a^{\sg_\zA},b^{\sg_\zA},c^{\sg_\zA}$.
Therefore $\mal'(a,b,c)^{\sg_\zA}=d(abc)^{\sg_\zA}=(abc)^{\sg_\zA}$, 
and we can choose $\mal'$ for $\mal$.
\end{proof}

Next we show some useful properties of SBM algebras. 
Let $\zA$ be an SBM algebra and $\max(\zA)$ the maximal block
of $\sg$, that is, $\max(\zA)\cdot a\sse\max(\zA)$ for all $a\in\zA$. 

\begin{lemma}\label{lem:irreducible-maximal}
(1) The equivalence relation $\th_\zA$ whose blocks are $\max(\zA)$ and all the 
remaining elements form singleton blocks, is a congruence. \\
(2) Let $\rel$ be a subdirect product of SBM algebras $\vc\zA n$ and the 
equivalence relation $\th_\rel$ is such that its blocks are 
$\max(\rel)=\rel\cap(\max(\zA_1)\tm\dots\tm\max(\zA_n))$, 
and all the remaining elements form singleton blocks. Then $\th_\rel$ is a 
congruence.
\end{lemma}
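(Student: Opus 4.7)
My plan is to verify directly that each of $\th_\zA$ and $\th_\rel$ is invariant under the basic operations $\cdot$ and $\mal$, using that $\max(\zA)$ (respectively $\max(\rel)$) is \emph{absorbing} for these operations.

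First I would establish the key absorbing property needed for (1): whenever at least one argument of $\cdot$ or $\mal$ lies in $\max(\zA)$, the output lies in $\max(\zA)$ as well. For $\cdot$ this is immediate from $(xy)^\sg = x^\sg y^\sg$ together with the fact that $\max(\zA)\fac\sg$ is the top element of the semilattice $\zA\fac\sg$ and hence absorbs every other element. For $\mal$ I would invoke Lemma~\ref{lem:dot-inequality}(2), which gives $\mal(a,b,c)^\sg=(abc)^\sg$, so the image of $\mal$ falls in $\max(\zA)$ as soon as any one of $a,b,c$ does. I would also note that $\max(\zA)$, being a single $\sg$-block, is a subalgebra of $\zA$ and so is closed under $\cdot$ and $\mal$ in the case that all arguments lie in it.

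A short case analysis then closes (1). The $\th_\zA$-classes are $\max(\zA)$ together with singletons outside it, so for any tuple of $\th_\zA$-related pairs of arguments either every pair is diagonal, in which case the two outputs are literally equal, or some pair has distinct entries, in which case both entries of that pair lie in $\max(\zA)$; by the absorbing property both outputs then lie in $\max(\zA)$, and so are $\th_\zA$-related. This shows $\th_\zA\in\Con(\zA)$.

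Part (2) is the coordinatewise lift of part (1). If $\max(\rel)$ is empty then $\th_\rel$ is the equality relation and nothing is to be checked; otherwise, since $\max(\rel) = \rel\cap(\max(\zA_1)\tms\max(\zA_n))$ and each $\max(\zA_i)$ is a subalgebra of $\zA_i$, the set $\max(\rel)$ is a subalgebra of $\rel$. Applied in each coordinate, the absorbing property from part~(1) shows that whenever some component of a tuple $\ba\in\rel$ lies in $\max(\zA_i)$, the corresponding component of any operation value also does; the remainder of the argument is the same case analysis on $\th_\rel$-related pairs as in part (1), with ``equal or both in $\max(\zA)$'' replaced by ``equal or both in $\max(\rel)$''. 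There is no real obstacle here: the entire proof reduces to the fact that the top of the semilattice $\zA\fac\sg$ is absorbing, with Lemma~\ref{lem:dot-inequality}(2) bridging this semilattice computation to the Mal'tsev operation $\mal$.
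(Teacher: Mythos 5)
Your proof is correct and follows essentially the same route as the paper's: the paper's entire argument is the observation that $ax$, $xa$, $\mal(a,x,y)$, $\mal(x,a,y)$, $\mal(x,y,a)$ all lie in $\max(\zA)$ once $a\in\max(\zA)$, which is exactly your absorption property, followed by the same (implicit) case analysis. Your explicit appeal to Lemma~\ref{lem:dot-inequality}(2) to justify absorption for $\mal$ is a welcome extra precision, since without that normalization of $\mal$ the claim for the Mal'tsev operation would not be automatic.
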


\begin{proof}
(1) It suffices to observe that for any $a\in\max(\zA)$ we have $ax,xa$, 
$\mal(a,x,y),\mal(x,a,y),\mal(x,y,a)\in\max(\zA)$ for any $x,y$, and 
therefore all non-constant polynomials of $\zA$ preserve $\max(\zA)$.

(2) is similar to (1).
\end{proof}

When dealing with a relation over algebras $\vc\zA n$ or a CSP with domains
$\zA_v$ we will simplify the notation $\th_{\zA_i},\th_{\zA_v}$ to $\th_i,\th_v$.

\begin{lemma}\label{lem:min-set}
Every $(\al,\beta)$-minimal set, for $\al\prec\beta\le\th_\zA$, is a subset of $\max(\zA)$.
\end{lemma}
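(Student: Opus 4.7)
The plan is to show $U\sse\max(\zA)$ by building a polynomial $q$ whose image lies inside $U\cap\max(\zA)$ while still satisfying $q(\beta)\not\sse\al$, and then invoking the minimality of $U$ to conclude $q(\zA)=U$.

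First I invoke Lemma~\ref{lem:minimal-sets}(3) to fix an idempotent polynomial $f$ with $f(\zA)=U$ and $f(\beta)\not\sse\al$. Since $\beta\le\th_\zA$ and the only non-singleton block of $\th_\zA$ is $\max(\zA)$, every non-diagonal pair of $\beta$, and in particular every pair in $\beta\setminus\al$, lies in $\max(\zA)\tm\max(\zA)$. So I can choose witnesses $a,b\in\max(\zA)$ with $a\eqc\beta b$ and $f(a)\not\eqc\al f(b)$.

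The key intermediate step is to verify that $f(\max(\zA))\sse\max(\zA)$. Writing $f(x)=t(x,c_1\zd c_k)$, I pass to $\zA\fac\sg$. Lemma~\ref{lem:dot-inequality}(2) together with the fact that $\cdot^\sg$ is a semilattice operation shows that every term operation of $\zA\fac\sg$ is built from $\cdot^\sg$ alone, so in one variable $f^\sg$ is either a constant or of the form $x^\sg\cdot d$. In the latter case $f^\sg(\max^\sg)=\max^\sg$ because $\max^\sg$ is the absorbing top. In the former case, if the constant $c^\sg\ne\max^\sg$ then the range of $f$ lies in a $\sg$-block disjoint from $\max(\zA)$; on that block $\beta$ and $\al$ both restrict to the identity (since $\beta\le\th_\zA$), which forces $f(\beta)\sse\al$ and contradicts the choice of $f$. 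Either remaining case gives $f(\max(\zA))\sse\max(\zA)$, and in particular $f(a),f(b)\in U\cap\max(\zA)$.

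Now fix any $m\in\max(\zA)$ and define $q(x)=f(xm)$. Since $m$ is absorbing in $\zA\fac\sg$, $xm\in\max(\zA)$ for every $x\in\zA$, hence $q(\zA)\sse f(\max(\zA))\sse U\cap\max(\zA)$ by the previous paragraph. On the other hand $\cdot$ is a projection on $\max(\zA)$ by the definition of an SBM algebra, so $xm=x$ for $x\in\max(\zA)$; therefore $q(a)=f(a)$ and $q(b)=f(b)$, giving $q(\beta)\not\sse\al$. Thus $q(\zA)$ is the image of a polynomial witnessing $q(\beta)\not\sse\al$ and is contained in $U$; by the minimality of $U$, $q(\zA)=U$, so $U\sse\max(\zA)$. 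The only step that requires real care is the inclusion $f(\max(\zA))\sse\max(\zA)$, which hinges entirely on Lemma~\ref{lem:dot-inequality}(2) collapsing $\mal$ to a semilattice term on the quotient so that the absorbing behaviour of $\max(\zA)^\sg$ can be used.
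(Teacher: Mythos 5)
Your proof is correct and follows essentially the same route as the paper: take an idempotent polynomial onto $U$, compose it with multiplication by a fixed element of $\max(\zA)$ so that absorption pushes the image into $U\cap\max(\zA)$ while the projection behaviour of $\cdot$ on $\max(\zA)$ preserves a pair witnessing $\beta\not\sse\al$, and conclude by minimality of $U$. The only differences are cosmetic (you form $f(x\cdot m)$ where the paper forms $f(f(x)\cdot a)$), and your explicit justification of $f(\max(\zA))\sse\max(\zA)$ via the quotient semilattice is a welcome elaboration of a step the paper merely asserts.
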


\begin{proof}
Let $U$ be a $(\al,\beta)$-minimal set and $f$ an idempotent polynomial 
with $f(\zA)=U$ and $f(\beta)\not\sse\al$. Since $\beta\le\th_\zA$,  
$c,d\in U\cap\max(\zA)$ for some $(c,d)\in\beta-\al$, as otherwise we would 
have $f(\beta)\sse\al$. 
Take $a\in\max(\zA)$ and set $g(x)=f(x)a$. For any $b\in U\cap\max(\zA)$
we have $g(b)=f(b)a=ba=b$.
Therefore $g(\beta)\not\sse\al$ and $g(\zA)\sse \max(\zA)$.
Finally, $f(\max(\zA))\sse\max(\zA)$, therefore 
$f\circ g(\zA)\sse U\cap\max(\zA)$
and $f\circ g(x)=x$ for $x\in U\cap\max(\zA)$. As $U$ is minimal, 
$U=U\cap\max(\zA)$.
\end{proof}

\subsection{Maroti's reduction}\label{sec:maroti}

In this section we describe a reduction introduced by Maroti in \cite{Maroti10:tree} that allows us to reduce CSPs over SBM algebras to CSPs over SBM algebras of a certain restricted type. More precisely, it allows us to assume that every domain $\zA$ is either a Mal'tsev algebra with $\mal$ as a Mal'tsev operation, or it contains a \emph{minimal element} $a$, that is, an element such that $ab=ba=b$ for all $b\in\zA$. Moreover, as is easily seen, such element is unique and forms a $\sg_\zA$-block, which is also the smallest element of the semilattice $\zA\fac{\sg_\zA}$.

Let $f$ be an idempotent unary polynomial of algebra $\zA$ and $A$ the universe of $\zA$. The \emph{retract} $f(\zA)$ of $\zA$ is the algebra with universe $f(A)$, whose basic operations are of the form $f\circ g$, given by $f\circ g(\vc xn)=f(g(\vc xn))$ for $\vc xn\in f(A)$, where $g$ is a basic operation of $\zA$. 

\begin{lemma}\label{lem:retract}
A retract of an SBM algebra through an idempotent polynomial is an SBM algebra.
\end{lemma}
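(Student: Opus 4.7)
The plan is to take $\sg' = \sg \cap (B\tm B)$ as the witnessing congruence on the retract $\zB = f(\zA)$, where $B = f(A)$, and check each clause of the SBM definition for $\zB$ equipped with basic operations $f\cc\cd$ and $f\cc\mal$. That $\sg'$ is a congruence of $\zB$ is immediate: every basic operation of $\zB$ has the form $f\cc g$, and since both $f$ (an idempotent unary polynomial of $\zA$) and $g$ preserve $\sg$, so does $f\cc g$; its image lies in $B$, so $\sg'$ is preserved.

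For the semilattice clause, consider the idempotent unary polynomial $f^\sg$ that $f$ induces on the semilattice $\zA\fac\sg$. Inspection of the possible forms of a unary polynomial of a semilattice (a constant, the identity, or $x\mapsto x\cd c$ for some $c$) shows that the image $S$ of $f^\sg$ is a subsemilattice of $\zA\fac\sg$ on which $f^\sg$ acts as the identity. Under the natural bijection $b^{\sg'}\leftrightarrow b^\sg$ between $B\fac{\sg'}$ and $S$, the operation $(f\cc\cd)^{\sg'}$ on $B\fac{\sg'}$ corresponds to the semilattice operation of $\zA\fac\sg$ restricted to $S$, and by Lemma~\ref{lem:dot-inequality}(2), $(f\cc\mal)^{\sg'}$ corresponds to the semilattice term $xyz$ on $S$. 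Hence $\zB\fac{\sg'}$ is term equivalent to a semilattice with semilattice operation $(f\cc\cd)^{\sg'}$.

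For the block-level conditions, every $\sg'$-block is $C\cap B$ for some $\sg$-block $C$ of $\zA$. Since $\cd$ is a projection on $C$ and $f\red B=\id$, we have $f\cc\cd(a,b)=f(ab)=f(a)=a$ for $a,b\in C\cap B$, so $f\cc\cd$ is a projection on each $\sg'$-block. Similarly $f\cc\mal(a,b,b)=f(a)=a$ and $f\cc\mal(b,b,a)=f(a)=a$ on $C\cap B$, so $f\cc\mal$ is Mal'tsev on each $\sg'$-block. Idempotence of $\zB$ is immediate from idempotence of $\zA$ and $f\red B=\id$. The main technical point is the semilattice analysis of the second paragraph; once $B\fac{\sg'}$ is identified with a subsemilattice of $\zA\fac\sg$ on which $f^\sg$ is trivial, everything else is routine bookkeeping.
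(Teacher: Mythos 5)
Your proof is correct and follows essentially the same route as the paper's: the same witnessing congruence ($\sg$ restricted to the retract), the same verification of the Mal'tsev identities on blocks via $f\red{B}=\id$, and the same reduction of the semilattice clause to the fact that a retract of a semilattice under a semilattice polynomial is again a semilattice. You merely spell out some steps the paper takes as given (the classification of unary semilattice polynomials and the projection property on blocks), which is fine.
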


\begin{proof}
Let $f$ be an idempotent polynomial.
Let $g_1(x,y)=f(xy)$, $\mal_1(x,y,z)=f(\mal(x,y,z))$ be the basic operations of the  
retract, $\zA_1=f(\zA)$, and $\sg_1=\sg_\zA\red{\zA_1}$.
Firstly, note that $\sg_1$ is a congruence of $\zA_1$ and $\zA_1$ is an idempotent algebra.
Since $\zA\fac{\sg_\zA}$ is term equivalent to a semilattice and any retract of a semilattice by a semilattice polynomial 
is a semilattice, so is $\zA_1\fac{\sg_1}$. Finally,
\begin{eqnarray*}
&& \mal_1(x,y,y)=f(\mal(x,y,y))=f(x)=x\\
&& \mal_1(y,y,x)=f(\mal(y,y,x))=f(x)=x,
\end{eqnarray*}
for any $x,y\in\zA_1$ with $x\eqc{\sg_1}y$.
\end{proof}

The results of \cite{Maroti10:tree} imply the following.
Let $\gA$ be a class of similar finite algebras closed under subalgebras, 
and retracts via idempotent unary polynomials. 
Suppose that $\gA$ has a term operation $f$ satisfying the 
following conditions for some $\zB\in\gA$:
\begin{itemize}
\item[(1)]
$f(x,f(x,y))=f(x,y)$ for any $x,y\in\zB$;
\item[(2)]
for each $a\in\zB$ the mapping $x\mapsto f(a,x)$ is not surjective;
\item[(3)]
the set $C$ of $a\in\zB$ such that $x\mapsto f(x,a)$ is surjective 
generates a proper subalgebra of $\zB$.
\end{itemize}
Then $\CSP(\gA)$ is polynomial time reducible to $\CSP(\gA-\{\zB\})$.

By Lemma~\ref{lem:dot-inequality} the operation $\cdot$ of the class of SBM algebras from 
$\gA$ satisfies condition (1). If the operation $a\cdot x$ 
is surjective for some $a$, then $a\le x$  for all $x\in\zB$. Therefore the only case 
when condition (2) is not satisfied is when $\zB$ has a minimal element. 
Finally, condition (3) is satisfied whenever $\zB$ is not a Mal'tsev algebra. 
Therefore, choosing $\zB$ to be a maximal (in terms of cardinality) 
algebra from $\gA$ satisfying conditions (1)--(3) we may only consider 
instances of $\CSP(\gA)$, in which every domain has a minimal element
or is a Mal'tsev algebra.

\begin{corollary}\label{cor:maroti}
Every instance $\cP\in\CSP(\gA)$ can be reduced in polynomial time
to polynomially many instances over algebras each of which either is 
Mal'tsev or has a  minimal element.
\end{corollary}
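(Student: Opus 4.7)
The plan is a direct iterative application of Maroti's reduction theorem (stated immediately before the corollary). First, I would close the class of domain algebras appearing in $\cP$ under subalgebras and retracts via idempotent unary polynomials, obtaining a finite class $\gA'$; this closure stays within SBM algebras by Lemma~\ref{lem:retract}, and is finite because universe sizes are bounded by $\max_{\zA\in\gA}|\zA|$. The induction is on the number of algebras in $\gA'$ that are neither Mal'tsev nor have a minimal element. If this number is zero, the instance $\cP$ already has the desired form and we are done.

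Otherwise, I would pick a $\zB \in \gA'$ of maximum cardinality among those that are neither Mal'tsev nor possess a minimal element, and verify Maroti's three conditions for $\zB$ with the SBM binary operation $f=\cdot$:
\begin{itemize}
\item[(1)] $x(xy)=xy$ holds on all of $\gA'$, hence on $\zB$, by Lemma~\ref{lem:dot-inequality}.
\item[(2)] If $x\mapsto a\cdot x$ were surjective for some $a\in\zB$, then by finiteness it would be bijective, and since $a\le ax$ (again Lemma~\ref{lem:dot-inequality}) every element of $\zB$ would dominate $a$, making $a$ a minimal element of $\zB$, contrary to the choice of $\zB$.
\item[(3)] One shows that the set $C\subseteq\zB$ of elements $a$ for which $x\mapsto xa$ is surjective generates a proper subalgebra unless $\zB$ is itself Mal'tsev; this is the fact invoked in the paragraph preceding the corollary.
\end{itemize}
Applying the quoted Maroti reduction then yields a polynomial-time reduction from $\CSP(\gA')$ to $\CSP(\gA'\setminus\{\zB\})$, producing polynomially many sub-instances each of which avoids $\zB$ as a domain.

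Recursing on the smaller class $\gA'\setminus\{\zB\}$, the induction terminates after at most $|\gA'|$ rounds, and the total blow-up is $n^{O(|\gA'|)}$, which is polynomial in the input size since $|\gA'|$ is a constant depending only on the fixed ambient algebra $\zA$. At termination every remaining domain is Mal'tsev or has a minimal element, as required.

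The main obstacle in this plan is the verification of condition (3): showing that if for every $a\in\zB$ the map $x\mapsto xa$ is surjective then $\zB$ must be Mal'tsev. Here one uses that surjectivity of $x\mapsto xa$ (with finite $\zB$) forces $a$ to lie in the top $\sg_\zB$-block of the semilattice quotient $\zB\fac{\sg_\zB}$, so that if $C$ generates $\zB$ the whole algebra collapses onto a single $\sg_\zB$-block and is thus Mal'tsev via $\mal$. Every other step is essentially bookkeeping around the cited reductions.
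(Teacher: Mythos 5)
Your proposal follows the paper's argument essentially verbatim: close the class under subalgebras and idempotent unary retracts, pick a maximal-cardinality algebra $\zB$ that is neither Mal'tsev nor has a minimal element, verify Maroti's conditions (1)--(3), apply the reduction to $\CSP(\gA-\{\zB\})$, and iterate; conditions (1) and (2) are handled exactly as in the paper via Lemma~\ref{lem:dot-inequality}. The one slip is in your justification of condition (3): surjectivity of $x\mapsto xa$ does \emph{not} force $a$ into the top $\sg_\zB$-block --- if $a$ belonged to $\max(\zB)$ then $xa\in\max(\zB)$ for every $x$, so that map could be surjective only when $\zB$ is a single block. The correct observation is the dual one: since $(xa)^{\sg_\zB}=x^{\sg_\zB}a^{\sg_\zB}$ lies above $a^{\sg_\zB}$ in the semilattice order on $\zB\fac{\sg_\zB}$, surjectivity forces $a^{\sg_\zB}$ to be the \emph{bottom} (neutral) element of the quotient semilattice; hence $C$ is contained in a single $\sg_\zB$-block, which is a subalgebra (by idempotence) and is proper unless $\zB$ consists of one block, i.e., is Mal'tsev. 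With that direction corrected your argument closes, and the remaining bookkeeping (finiteness of the closed class, bounded recursion depth, polynomial blow-up) matches what the paper does; the paper itself asserts condition (3) without spelling out this detail.
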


Throughout the rest of the paper $\gA$ is a finite class of finite SBM algebras 
closed under taking subalgebras, quotient algebras, and retracts through
unary idempotent polynomials.

\section{Separating congruences}\label{sec:coherent}

In this section we develop a method that will lead to some way to decompose
CSPs over SBM algebras. First, we introduce and study the
notion of separation of prime intervals.
Let $\rel$ be a subdirect product of $\zA_1\tm\dots\tm\zA_n$ and 
$\al,\beta\in\Con(\zA_i)$, $\gm,\dl\in\Con(\zA_j)$,  for some $i,j\in[n]$, 
such that $\al\prec\beta$, $\gm\prec\dl$. Recall that interval $(\al,\beta)$ 
can be 
separated from $(\gm,\dl)$ if there is a unary polynomial $\bbf$ of $\rel$ such 
that $f_i(\beta)\not\sse\al$ while $f_j(\dl)\sse\gm$. If $\bbf$ satisfies this 
property we will also say that $\bbf$ {\em separates} $(\al,\beta)$ from 
$(\gm,\dl)$. In the definition above it is possible that $i=j$ or that $n=1$; 
in this cases the argument in some proofs may be slightly different. To avoid 
such complications we will always assume that $i\ne j$, as the following 
lemma allows us to do. 

\begin{lemma}\label{lem:separation-separation}
Let $\relo$ be the binary equality relation on $\zA$. Prime interval 
$(\al,\beta)$, $\al\prec\beta\le\th_\zA$, can be separated from 
$(\gm,\dl)$, $\gm\prec\dl\le\th_\zA$, 
as intervals in $\Con(\zA)$ if and only if $(\al,\beta)$ can be separated 
from $(\gm,\dl)$ in $\relo$ (as intervals in the congruence lattices
of the factors of a binary relation). 
\end{lemma}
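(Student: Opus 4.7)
The plan is to prove both directions of the equivalence by a direct translation between polynomials of $\zA$ and polynomials of $\relo$, exploiting that $\relo=\{(a,a)\mid a\in\zA\}$ is the diagonal subalgebra of $\zA\tm\zA$.

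For the forward direction, suppose a unary polynomial $h$ of $\zA$ satisfies $h(\beta)\not\sse\al$ and $h(\dl)\sse\gm$. I would write $h(x)=t(x,\vc a\ell)$ for some term operation $t$ of $\zA$ and parameters $\vc a\ell\in\zA$, then set $\bbf(\bx)=t(\bx,(a_1,a_1)\zd(a_\ell,a_\ell))$. Each parameter $(a_i,a_i)$ lies in $\relo$, so $\bbf$ is a genuine unary polynomial of $\relo$. Unpacking coordinates gives $f_1=f_2=h$, so the conditions $f_1(\beta)\not\sse\al$ and $f_2(\dl)\sse\gm$ follow immediately, witnessing separation in $\relo$.

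For the converse, let $\bbf$ be a unary polynomial of $\relo$ that separates $(\al,\beta)$ from $(\gm,\dl)$, written $\bbf(\bx)=g(\bx,\vc\ba\ell)$ with $g$ a term operation of $\relo$ and $\vc\ba\ell\in\relo$. Every tuple of $\relo$ has the form $(a,a)$, so $\ba_i=(a_i,a_i)$ for some $a_i\in\zA$; and since $\relo$ is a subalgebra of $\zA\tm\zA$, every basic and hence every term operation of $\relo$ acts coordinatewise via the corresponding operation of $\zA$. It follows that $f_1$ and $f_2$ coincide with the single unary polynomial $h(x)=g(x,\vc a\ell)$ of $\zA$. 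The hypotheses $f_1(\beta)\not\sse\al$ and $f_2(\dl)\sse\gm$ thus translate directly into $h(\beta)\not\sse\al$ and $h(\dl)\sse\gm$, giving separation in $\Con(\zA)$.

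Neither direction presents a real obstacle: the whole substance is that for the diagonal relation, unary polynomials of $\relo$ correspond bijectively to unary polynomials of $\zA$ applied in parallel to both coordinates. The lemma's role is organizational, formally allowing later proofs to restrict to the $i\ne j$ case of separation by replacing a one-algebra argument with the analogous argument inside the two factors of $\relo$.
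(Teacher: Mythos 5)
Your proposal is correct and is essentially the paper's own argument: both rest on the single observation that, because every tuple of the diagonal relation $\relo$ has the form $(a,a)$, a unary polynomial of $\relo$ acts on the two factors as one and the same unary polynomial of $\zA$, so the separation conditions translate verbatim in both directions. The paper states this in one line where you spell out the two directions explicitly, but the content is identical.
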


\begin{proof}
Note that for any polynomial $\bbf$ of $\relo$ its action on the first and second 
factors of $\relo$ is the same polynomial of $\zA$. By definition 
$\al\prec\beta$ can be 
separated from $\gm\prec\dl$ in $\Con(\zA)$ if and only if there is 
a unary polynomial $f$ of $\zA$, $f(\beta)\not\sse\al$ while $f(\dl)\sse\gm$. 
This condition can be expressed as follows: there is 
a unary polynomial $\bbf$ of $\relo$, $f_1(\beta)\not\sse\al$ while 
$f_2(\dl)\sse\gm$, which precisely means that $(\al,\beta)$ can be separated 
from $(\gm,\dl)$ in $\relo$
\end{proof}

In Section~\ref{sec:separation}
we study the sets of intervals that cannot be separated from each other. 
These sets will later give us some sort of decomposition of CSP instances.
Collapsing polynomials introduced in Section~\ref{sec:collapsing} yeild  
one of the main ingredients of the solution algorithm. 
Section~\ref{sec:splits} provides a sufficient condition for separation of 
intervals and a related notion of decomposition, which is the second ingredient.

\subsection{Basic properties of separation}\label{sec:separation}

Let again $\rel$ be a subdirect product of SBM algebras 
$\zA_1\tm\ldots\tm \zA_n$, $i,j\in[n]$, and $\al,\beta\in\Con(\zA_i)$, 
$\gm,\dl\in\Con(\zA_j)$ with $\al\prec\beta\le\th_i$, $\gm\prec\dl\le\th_j$. 

First, we show that separating polynomials can be chosen to satisfy certain 
simple conditions.

\begin{lemma}\label{lem:max-separate}
If $(\al,\beta)$ can be separated from $(\gm,\dl)$ then there is a polynomial 
$\bbf$ that separates $(\al,\beta)$ from $(\gm,\dl)$ and such that 
$f_\ell(\zA_\ell)\sse\max(\zA_\ell)$ for every $\ell\in[n]$.
\end{lemma}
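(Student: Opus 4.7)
The plan is to start from an arbitrary separating polynomial $\bg$ and modify it by postmultiplying coordinatewise with a tuple $\ba \in \rel$ whose every coordinate lies in $\max(\zA_\ell)$. Concretely, set $\bbf(\bx) = \bg(\bx)\cdot\ba$. This is a polynomial of $\rel$ provided $\ba\in\rel$. To see that a suitable $\ba$ exists, take the product (in any order, using $\cdot$) of all the tuples of $\rel$; since $\rel$ is a subalgebra of $\zA_1\tms\zA_n$ this product lies in $\rel$, and since $\pr_\ell\rel = \zA_\ell$ (subdirectness) its $\ell$th coordinate is a product of all elements of $\zA_\ell$, hence $\sg_\ell$-equivalent to the absorbing element of the semilattice $\zA_\ell\fac{\sg_\ell}$, i.e., in $\max(\zA_\ell)$.

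Next I would verify the range condition $f_\ell(\zA_\ell)\sse\max(\zA_\ell)$ for every $\ell$. Passing to the quotient, $(g_\ell(x)\cdot\ba[\ell])^{\sg_\ell} = g_\ell(x)^{\sg_\ell}\cdot \ba[\ell]^{\sg_\ell} = \ba[\ell]^{\sg_\ell}$ since $\ba[\ell]^{\sg_\ell}$ is the absorbing element of the semilattice $\zA_\ell\fac{\sg_\ell}$; thus $f_\ell(x)\in\max(\zA_\ell)$. The preservation of the second interval is even easier: $f_j(x) = h\circ g_j(x)$ where $h(y)=y\cdot\ba[j]$ is a unary polynomial of $\zA_j$, and since $\gm$ is a congruence it is preserved by every unary polynomial, so $f_j(\dl)=h(g_j(\dl))\sse h(\gm)\sse\gm$.

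The only delicate part, which is where I expect the real work, is to show that $f_i(\beta)\not\sse\al$ survives the multiplication by $\ba[i]$. Here I would use the minimal set machinery. Since $g_i(\beta)\not\sse\al$, Lemma~\ref{lem:minimal-sets}(4) supplies an $(\al,\beta)$-minimal set $U$ for which $g_i$ witnesses that $U$ and $g_i(U)$ are polynomially isomorphic; in particular $g_i(\beta\red U)\not\sse\al$ and $g_i(U)$ is itself an $(\al,\beta)$-minimal set. By Lemma~\ref{lem:min-set} both $U$ and $g_i(U)$ are contained in $\max(\zA_i)$. Since $\ba[i]\in\max(\zA_i)$ and $\cdot$ restricted to the $\sg_i$-block $\max(\zA_i)$ is a projection onto the first coordinate, we have $g_i(x)\cdot\ba[i]=g_i(x)$ for every $x\in U$. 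Hence $f_i$ and $g_i$ agree on $U$, so $f_i(\beta\red U)=g_i(\beta\red U)\not\sse\al$, and a fortiori $f_i(\beta)\not\sse\al$. Combining the three verifications, $\bbf$ separates $(\al,\beta)$ from $(\gm,\dl)$ and satisfies the required range condition on every coordinate.
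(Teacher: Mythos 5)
Your proposal is correct and follows essentially the same route as the paper: postmultiply a separating polynomial $\bg$ by a tuple $\ba\in\max(\rel)$, note that the range lands in $\max(\zA_\ell)$ and that $(\gm,\dl)$ stays collapsed, and use Lemma~\ref{lem:minimal-sets}(4) together with Lemma~\ref{lem:min-set} plus the fact that $\cdot$ is a projection on the block $\max(\zA_i)$ to see that $f_i(\beta)\not\sse\al$ survives. Your explicit justification that $\max(\rel)$ is nonempty (taking the product of all tuples of $\rel$) is a small detail the paper leaves implicit, but it does not change the argument.
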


\begin{proof}
Let $\bg$ separate $(\al,\beta)$ from $(\gm,\dl)$. Choose a tuple 
$\ba\in\max(\rel)$ and consider the 
polynomial $\bbf(x)=\bg(x)\cdot\ba$. As is easily seen, $f_\ell(\zA_\ell)\sse
\max(\zA_\ell)$ for $\ell\in[n]$. Since $g_j(\dl)\sse\gm$, we have 
$f_j(\dl)\sse\gm$.
Finally, take $a,b\in\max(\zA_i)\cap g_i(\zA_i)$ with $(a,b)\in\beta-\al$ 
and $a',b'\in\max(\zA_i)$
such that $g_i(a')=a$, $g_i(b')=b$. By Lemma~\ref{lem:minimal-sets}(4) 
and Lemma~\ref{lem:min-set} such elements exist, because
$g_i(\beta)\not\sse\al$ and all the nontrivial (that is, different from an
$\al$-block) $\beta_i$-blocks are inside $\max(\zA_i)$. Then 
$$
f_i(a')=g_i(a')\ba[i]=a\ba[i]=a \ne b=b\ba[i]=g_i(b')\ba[i]=f_i(b').
$$
\end{proof}

From now on we assume that all polynomials separating intervals
satisfy the conditions of Lemma~\ref{lem:max-separate}.

\begin{lemma}\label{lem:selected}
If $(\al,\beta)$ can be separated from $(\gm,\dl)$ then, for any
$(\al,\beta)$-minimal set $U$, there is an idempotent unary
polynomial $\bg$ such that $g_i(\zA_i)=U$, and $\bg$ separates 
$(\al,\beta)$ from $(\gm,\dl)$.
\end{lemma}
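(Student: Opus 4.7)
The plan is to start from a polynomial $\bbf$ separating $(\al,\beta)$ from $(\gm,\dl)$ (as provided by Lemma~\ref{lem:max-separate}, with image in the maximal blocks of every factor) and modify it so that its $i$-th component becomes an idempotent surjection onto the prescribed minimal set $U$, while the $j$-th component continues to send $\dl$ into $\gm$.

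Since $f_i(\beta)\not\sse\al$, Lemma~\ref{lem:minimal-sets}(4) supplies an $(\al,\beta)$-minimal set $W\sse\zA_i$ such that $f_i\red W:W\to f_i(W)$ is a polynomial isomorphism. By Lemma~\ref{lem:minimal-sets}(1) the sets $W,f_i(W),U$ are pairwise polynomially isomorphic, so there are polynomials $p,q$ of $\zA_i$ with $q(U)=W$ and $q\red U:U\to W$ a bijection, and with $p(f_i(W))=U$ and $p\red{f_i(W)}:f_i(W)\to U$ a bijection. Let $h$ be the idempotent polynomial of $\zA_i$ from Lemma~\ref{lem:minimal-sets}(3) with $h(\zA_i)=U$ (so $h\red U=\id_U$), and extend $p,q,h$ to polynomials $\mathbf{p},\mathbf{q},\bh$ of $\rel$.

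Set $\bg_0=\bh\circ\mathbf{p}\circ\bbf\circ\mathbf{q}$, whose $i$-th component is $(g_0)_i=h\circ p\circ f_i\circ q$. Restricted to $U$ this is the composition $h\red U\circ p\red{f_i(W)}\circ f_i\red W\circ q\red U$, a chain of bijections $U\to W\to f_i(W)\to U\to U$, hence a permutation of $U$; moreover $(g_0)_i(\zA_i)\sse h(\zA_i)=U$ while $(g_0)_i(U)=U$, so $(g_0)_i(\zA_i)=U$. On the $j$-th component, each of $q_j,f_j,p_j,h_j$ preserves $\gm$ (every unary polynomial preserves every congruence) and $f_j(\dl)\sse\gm$ by hypothesis, so $(g_0)_j(\dl)\sse\gm$. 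To finish, pick $K$ large enough that each component $((g_0)_\ell)^K$ is an idempotent polynomial of $\zA_\ell$ (such $K$ exists since every unary polynomial of a finite set has an idempotent iterate), and set $\bg=\bg_0^K$. Then $\bg$ is idempotent; since $(g_0)_i\red U$ is already a permutation and $(g_0)_i(\zA_i)=U$, the image of $g_i$ remains $U$ and $g_i\red U=\id_U$, yielding $g_i(\beta)\supseteq\beta\red U\not\sse\al$; and since $(g_0)_j$ sends $\gm$ into $\gm$, $g_j(\dl)\sse\gm$ follows by induction on $K$.

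The main obstacle is the tension between forcing the range to be exactly $U$ and preserving non-collapse of $\beta$: a naive idempotent power of a polynomial whose image only contains $U$ typically drops the image into a proper subset of $U$, and since every $(\al,\beta)$-minimal set has the same cardinality (Lemma~\ref{lem:minimal-sets}(1)) the resulting polynomial would then necessarily collapse $\beta$ into $\al$. Arranging $(g_0)_i\red U$ to be a permutation of $U$ before iterating is precisely the step that sidesteps this collapse.
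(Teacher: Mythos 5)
Your proof is correct and follows essentially the same route as the paper: compose the separating polynomial with polynomial isomorphisms between $(\al,\beta)$-minimal sets supplied by Lemma~\ref{lem:minimal-sets}, then pass to an idempotent iterate, observing that the $j$-th component still maps $\dl$ into $\gm$ because unary polynomials preserve congruences. The only (cosmetic) difference is that you pre- and post-compose so that the $i$-th component restricted to the prescribed $U$ is a permutation before iterating, whereas the paper first produces an idempotent polynomial onto \emph{some} minimal set and only then transfers to $U$ via Lemma~\ref{lem:minimal-sets}(1); your ordering makes the final step slightly more transparent.
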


\begin{proof}
Let $\bbf$ separate $(\al,\beta)$ from $(\gm,\dl)$. Then by 
Lemma~\ref{lem:minimal-sets}(4) $f_i(\zA_i)$ contains an
$(\al,\beta)$-minimal set $U'$, and there is an idempotent polynomial $h_i$
with $h_i(\zA_i)=U'$. The polynomial
$h_i$ can be extended to a polynomial $\bh$ of $\rel$. Then $\bbf'=\bh\circ\bbf$
separates $(\al,\beta)$ from $(\gm,\dl)$ and $f'_i(\zA_i)=U'$. 

By Lemma~\ref{lem:minimal-sets}(2) there is an $(\al,\beta)$-minimal 
set $U''$ with $f'_i(U'')=U'$ and an
idempotent polynomial $h'_i$ with $h'_i(U')=U''$. As above, the polynomial $h'_i$
can be extended to a polynomial $\bh'$ of $\rel$. For a certain $k$,
$(\bbf'\circ\bh')^k$ is idempotent, separates $i$ from $j$, and
$(f'_i\circ h'_i)^k(\zA_i)=U''$. 
Now the lemma follows easily from Lemma~\ref{lem:minimal-sets}(1).
\end{proof}

Let $\cI_\rel$ be the set of triples $(i,\al,\beta)$ such that $i\in[n]$, 
$\al,\beta\in\Con(\zA_i)$ and $\al\prec\beta\le\th_i$. The relation 
`cannot be separated in $\rel$' on $\cI_\rel$ is clearly reflexive and transitive. 
Now, we prove it is also symmetric

\begin{lemma}\label{lem:symmetric}
If $(\al,\beta)$ can be separated from $(\gm,\dl)$ then $(\gm,\dl)$ can be 
separated from~$(\al,\beta)$.
\end{lemma}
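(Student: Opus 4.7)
The plan is to transform the given separating polynomial $\bbf$ into one that separates in the opposite direction, using the Mal'tsev operation on the $\max$-blocks to swap the roles of `preserve' and `collapse' between the two coordinates. First I normalize: by Lemma~\ref{lem:selected} I take $\bbf$ idempotent with $f_i(\zA_i)=U$ a fixed $(\al,\beta)$-minimal set, and by Lemma~\ref{lem:max-separate} I may assume $f_\ell(\zA_\ell)\sse\max(\zA_\ell)$ for every $\ell$. Fixing a tuple $\ba\in\max(\rel)$ (which exists because $\max(\rel)$ is a nonempty $\th_\rel$-block by Lemma~\ref{lem:irreducible-maximal}(2)), I set
\[
\bg(x)=\mal(x\cdot\ba,\;\bbf(x),\;\bbf(\ba)).
\]
The product $x\cdot\ba$ pushes the first argument of $\mal$ into $\max(\rel)$, so that all three arguments of $\mal$ lie in $\max(\rel)$ and $\mal$ acts componentwise as the Mal'tsev operation on each $\max(\zA_\ell)$. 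The claim is that $\bg$ separates $(\gm,\dl)$ from $(\al,\beta)$.

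The easy half is coord $j$. By Lemma~\ref{lem:min-set} there is a pair $(y,y')\in(\dl\setminus\gm)\cap\max(\zA_j)^2$, and the abelian-style Mal'tsev calculation on $\max(\zA_j)$ gives
\[
g_j(y)-g_j(y')\;=\;(y-y')-(f_j(y)-f_j(y')).
\]
Since $f_j(\dl)\sse\gm$, the second summand lies in $\gm$, hence $g_j(y)-g_j(y')\equiv y-y'\pmod\gm$; as $(y,y')\notin\gm$ this shows $g_j(\dl)\not\sse\gm$, so $\bg$ does not collapse $(\gm,\dl)$.

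The delicate half is coord $i$: I need $g_i(\beta)\sse\al$. Idempotency of $\bbf$ gives $f_i\red U=\mathrm{id}_U$, so for $x\in U$ the Mal'tsev identity yields $g_i(x)=\mal(x,x,f_i(\ba[i]))=f_i(\ba[i])$; thus $g_i$ is constant on $U$ and collapses $\beta\red U$ into $\al$. To extend this to all $(x,x')\in\beta\setminus\al$ inside $\max(\zA_i)$, I apply Lemma~\ref{lem:Maltsev-minimal-sets} to locate an $(\al,\beta)$-minimal set $U'$ meeting $x^\al$ and $(x')^\al$, and use Lemma~\ref{lem:minimal-sets}(1)--(2) to identify $U'$ with $U$ up to polynomial isomorphism realized by $f_i$. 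The collapse already established on $U$ is then transported back to $U'$ modulo $\al$, so that $g_i(x)\eqc\al g_i(x')$.

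The main obstacle is exactly this last transport. While the identity $\mal(x,x,c)=c$ instantly yields the collapse on $U$, on another minimal set $U'$ the quantity $(x-x')-(f_i(x)-f_i(x'))$ does not automatically lie in $\al$ just because $f_i\red{U'}$ is a polynomial isomorphism between $U'$ and $U$: the isomorphism need not act as the identity on the module $\beta/\al$. Overcoming this is expected to require either iterating the construction, since each composition of $\bg$ with itself further contracts the residual $\beta/\al$-variation outside $U$ by idempotency of $\bbf$, or coordinating the choice of $\ba$ with the minimal-set isomorphisms provided by Lemma~\ref{lem:minimal-sets} so that the Mal'tsev cancellation kills the residual variation modulo $\al$. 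Once the coord-$i$ collapse is established, it combines with the coord-$j$ calculation to produce the required reversed separating polynomial.
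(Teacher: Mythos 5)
Your construction of the single building block is essentially the paper's: the paper forms $\bh(x)=\mal(x,\bg(x),\bg(\ba))$ from an idempotent separating polynomial $\bg$ and a tuple $\ba$, which collapses the minimal set $g_i(\zA_i)$ to a point in coordinate $i$ while acting as the identity modulo $\gm$ on a $\dl$-block $B$ in coordinate $j$ (for the latter the paper chooses $\ba$ with $\ba[j]\in B$, so that the second and third arguments of $\mal$ are $\gm$-equivalent on $B$; your ``abelian-style'' difference calculation is not justified in a general Mal'tsev algebra and silently needs this same choice of $\ba$, but that is easily repaired). The genuine gap is exactly the one you flag and then leave open: a single such polynomial collapses only \emph{one} $(\al,\beta)$-minimal set $U$, and on the other minimal sets $U'$ the map $f_i\red{U'}$ is merely a polynomial isomorphism onto $U$, so $\mal(x,f_i(x),f_i(\ba[i]))$ need not send $\beta\red{U'}$ into $\al$. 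Neither of your proposed remedies is carried out, and the first one as stated does not work: iterating the single polynomial $\bg$ to idempotency only stabilizes its range, and that range can still contain an $(\al,\beta)$-minimal set $U^*\ne U$ on which the idempotent iterate acts as a permutation, hence never collapses $\beta$ there.

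The paper closes this gap differently. It lists \emph{all} the $(\al,\beta)$-minimal sets $\vc Uk$ and, using Lemma~\ref{lem:selected}, takes for each $U_\ell$ a separate idempotent separating polynomial $\bg^{(\ell)}$ with $g^{(\ell)}_i(\zA_i)=U_\ell$, hence a separate block $\bh^{(\ell)}(x)=\mal(x,\bg^{(\ell)}(x),\bg^{(\ell)}(\ba))$ that collapses $U_\ell$ in coordinate $i$ and is the identity modulo $\gm$ on $B$ in coordinate $j$. It then composes these blocks in a carefully chosen order $\ell_1,\ell_2,\ldots$ so that each next minimal set $U_{\ell_s}$ lies inside the range of the composite built so far; each step therefore strictly shrinks the range in coordinate $i$, so after finitely many steps the range contains no $(\al,\beta)$-minimal set at all, which by Lemma~\ref{lem:minimal-sets}(4) forces the composite to satisfy $h_i(\beta)\sse\al$. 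Since every factor is the identity modulo $\gm$ on $B$ in coordinate $j$, so is the composite, giving $h_j(\dl)\not\sse\gm$. This ``one collapsing block per minimal set, composed in range-decreasing order'' device is the missing idea; without it your argument establishes the collapse only on the single set $U$ and does not prove the lemma.
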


\begin{proof}
Let $\vc Uk$ be all the $(\al,\beta)$-minimal sets. By 
Lemma~\ref{lem:selected}, for every $U_\ell$, there is an idempotent 
unary polynomial $\bg^{(\ell)}$ separating $(\al,\beta)$ from $(\gm,\dl)$
and such that $g^{(\ell)}_i(\zA_i)=U_\ell$.  Take a 
$\dl$-block $B$ that contains more than one $\gm$-block, 
a tuple $\ba\in\rel$ such that $\ba[j]\in B$, and set
$\ba^{(\ell)}=\bg^{(\ell)}(\ba)$. By Lemmas~\ref{lem:min-set} 
and~\ref{lem:max-separate}
$\ba^{(1)}\zd\ba^{(k)}\in\max(\rel)$ 
and $\vc Uk\sse\max(\zA_i)$, and $B\sse\max(\zA_j)$. The operation 
$\bh^{(\ell)}(x)=\mal(x,\bg^{(\ell)}(x),\ba^{(\ell)})$ satisfies the conditions 
\begin{itemize}
\item
$h^{(\ell)}_i(x)=\mal(x,g^{(\ell)}_i(x),\ba^{(\ell)}[i])=
\mal(x,x,\ba^{(\ell)}[i]) =\ba^{(\ell)}[i]$ for all $x\in U_\ell$; 
\item
$h^{(\ell)}_j(x)=\mal(x,g^{(\ell)}_j(x),\ba^{(\ell)}[j])\eqc{\al_j}
\mal(x,\ba^{(\ell)}[j],\ba^{(\ell)}[j]) =x$ for all $x\in B$; 
\item
$\bh^{(\ell)}(\rel)\sse\max(\rel)$.
\end{itemize}
We are going to compose the polynomials $\bh^{(\ell)}$ such that the
composition collapses $\beta$. To this end take a sequence
$1=\ell_1,\ell_2,\ldots$ such that $U_{\ell_2}$ is a subset of the range of
$\ov h^{(1)}=h_i^{(\ell_1)}$, and, for $s>2$, $U_{\ell_s}$ is a subset of
the range of 
$\ov h^{(s-1)}=h_i^{(\ell_{s-1})}\circ\ldots\circ h_i^{(\ell_1)}$. Since
$|\ov h^{(s)}(\zA_i)|<|\ov h^{(s-1)}(\zA_i)|$, there is $r$ such that
$|\ov h^{(r)}(\zA_i)|$ contains no $(\al,\beta)$-minimal
sets. Therefore, setting
$\bh(x)=\bh^{(\ell_r)}(\bh^{(\ell_{r-1})}(\ldots\bh^{(\ell_1)}(x)\ldots))$ we
have that $h_i$ collapses all the $(\al,\beta)$-minimal sets, and $h_j$ acts
identically on $B\fac{\al_j}$. Thus, $\bh$ separates $(\gm,\dl)$ from 
$(\al,\beta)$.
\end{proof}

Lemma~\ref{lem:symmetric} together with the observation before it 
shows that the relation `cannot be separated' is an equivalence relation 
on $\cI$. 

\subsection{Collapsing polynomials}\label{sec:collapsing}

Intuitively, a collapsing polynomial for some prime interval $\al\prec\beta$
in an algebra or a subdirect product of algebras is a polynomial that collapses 
all prime intervals that can be separated from $\al\prec\beta$ and only
such prime intervals. 

Let $\rel$ be a subdirect product of SBM algebras $\zA_1\tms\zA_n$, 
and $(i,\al,\beta)\in\cI_\rel$. 
A unary idempotent polynomial $\bbf$ of $\rel$ is called 
\emph{$(\al,\beta)$-collapsing} if the following conditions hold:
\begin{itemize}
\item[(C1)]
for any $(j,\gm,\dl)\in\cI_\rel$, it holds $f_j(\dl)\sse\gm$, unless 
$(\al,\beta)$ and $(\gm,\dl)$ cannot be separated;
\item[(C2)]
for any $(j,\gm,\dl)\in\cI_\rel$ such that $(\al,\beta)$, $(\gm,\dl)$ cannot be 
separated, the set $f_j(\zA_j)$ is a $(\gm,\dl)$-minimal set.
\end{itemize}

First, we show that $(\al,\beta)$-collapsing polynomials exist even if we impose 
some additional requirements.

\begin{lemma}\label{lem:relation-collapse}
Let $\rel$ be a subdirect product of SBM algebras $\zA_1\tms\zA_n$ 
and $(i,\al,\beta)\in\cI_\rel$, 
and let $\ba\in\rel$ be such that $\ba[i]$ belongs to a $\beta$-block 
containing more than one $\al$-block and $b\in\zA_i$ with 
$(\ba[i],b)\in\beta-\al$. Then there is an $(\al,\beta)$-collapsing polynomial 
$\bbf$ of $\rel$ such that $\bbf(\ba)=\ba$ and $f_i(b)\eqc\al b$.
\end{lemma}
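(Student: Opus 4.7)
The plan is to construct $\bbf$ by enumerating all prime intervals $(j,\gm,\dl)\in\cI_\rel$ that can be separated from $(\al,\beta)$, producing for each one an auxiliary polynomial that collapses it while preserving $\ba$ and the pair $(\ba[i],b)$ modulo $\al$, and then composing all these auxiliaries and taking an idempotent power. Preprocessing: since $(\ba[i],b)\in\beta-\al$ with $\beta\le\th_i$ forces $\ba[i],b\in\max(\zA_i)$ by Lemma~\ref{lem:min-set}, after replacing $\ba$ by $\ba\cdot\bc$ for any $\bc\in\max(\rel)$ (which leaves $\ba[i]$ unchanged, because $\cdot$ is a projection on $\max(\zA_i)$, and pushes the other coordinates into $\max$ by $a\le ab$ from Lemma~\ref{lem:dot-inequality}) we may assume $\ba\in\max(\rel)$.

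For each separable $(j_k,\gm_k,\dl_k)$, use Lemma~\ref{lem:Maltsev-minimal-sets} inside the Mal'tsev algebra $\max(\zA_i)$ to choose an $(\al,\beta)$-minimal set $U$ containing both an element $\al$-equivalent to $\ba[i]$ and an element $\al$-equivalent to $b$. Then, by Lemmas~\ref{lem:max-separate} and~\ref{lem:selected}, pick an idempotent polynomial $\bg^{(k)}$ of $\rel$ separating $(\al,\beta)$ from $(\gm_k,\dl_k)$ with $\bg^{(k)}(\rel)\sse\max(\rel)$ and $g^{(k)}_i(\zA_i)=U$. Since $\bg^{(k)}$ is the identity on $U$ and preserves $\al$, it automatically satisfies $g^{(k)}_i(\ba[i])\eqc\al\ba[i]$ and $g^{(k)}_i(b)\eqc\al b$. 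Apply the Mal'tsev modification
$$\bh^{(k)}(x)=\mal\bigl(\bg^{(k)}(x),\bg^{(k)}(\ba),\ba\bigr).$$
All three arguments now lie in the Mal'tsev block $\max(\rel)$, so $\bh^{(k)}(\ba)=\mal(\bg^{(k)}(\ba),\bg^{(k)}(\ba),\ba)=\ba$, and $\bh^{(k)}_i(b)=\mal(g^{(k)}_i(b),g^{(k)}_i(\ba[i]),\ba[i])\eqc\al\mal(b,\ba[i],\ba[i])=b$. The collapse at $j_k$ is preserved too: for $x\eqc{\dl_k}y$ both $x,y\in\max(\zA_{j_k})$ (as $\dl_k\le\th_{j_k}$), and the Mal'tsev property together with $g^{(k)}_{j_k}(\dl_k)\sse\gm_k$ gives $\bh^{(k)}_{j_k}(x)\eqc{\gm_k}\bh^{(k)}_{j_k}(y)$.

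Let $\bg=\bh^{(1)}\circ\cdots\circ\bh^{(K)}$ and $\bbf=\bg^n$ for large enough $n$ so that $\bbf$ is idempotent. Every $\bh^{(k)}$ fixes $\ba$ and satisfies $\bh^{(k)}_i(b)\eqc\al b$, so the same holds for $\bbf$. For any separable $(j_k,\gm_k,\dl_k)$, the collapse $\bh^{(k)}_{j_k}(\dl_k)\sse\gm_k$ propagates through all subsequent factors because every polynomial preserves every congruence; this yields~(C1). For non-separable $(j,\gm,\dl)$, the polynomial $\bbf$ cannot separate it from $(\al,\beta)$, while $\bbf$ does act non-trivially on $(\al,\beta)$ at coordinate $i$ (since $f_i(\ba[i])=\ba[i]$ and $f_i(b)\eqc\al b$ imply $(f_i(\ba[i]),f_i(b))\in\beta-\al$); hence $f_j(\dl)\not\sse\gm$. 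Lemma~\ref{lem:minimal-sets}(4) then guarantees a $(\gm,\dl)$-minimal set inside $f_j(\zA_j)$, and one further composition with a collapsing polynomial chosen from the same non-separable class via Lemma~\ref{lem:selected} shrinks $f_j(\zA_j)$ down to a $(\gm,\dl)$-minimal set, giving~(C2).

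The main obstacle is producing auxiliary polynomials that simultaneously preserve $\ba$, preserve the pair $(\ba[i],b)$ modulo $\al$, and collapse the target interval at $j_k$. The naive Mal'tsev trick handles the first and third properties cleanly, but control over the second relies on the careful choice of $U$ via Lemma~\ref{lem:Maltsev-minimal-sets} inside the Mal'tsev algebra $\max(\zA_i)$, plus the observation that an idempotent polynomial whose image is $U$ must act as identity modulo $\al$ on any element $\al$-equivalent to a point of $U$.
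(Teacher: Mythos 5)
Your overall strategy is the paper's: compose separating polynomials for all separable intervals, use Lemma~\ref{lem:Maltsev-minimal-sets} to pick an $(\al,\beta)$-minimal set $U$ meeting both $\ba[i]^\al$ and $b^\al$, and repair the result with the Mal'tsev correction $\mal(\bg(x),\bg(\ba),\ba)$. However, there are two genuine gaps. First, the preprocessing step that replaces $\ba$ by $\ba\cdot\bc$ does not prove the lemma as stated: the conclusion $\bbf(\ba)=\ba$ is required for the \emph{given} tuple $\ba$, whose coordinates other than $i$ need not lie in $\max(\zA_\ell)$, and no reduction to the case $\ba\in\max(\rel)$ is available --- indeed any polynomial whose image lies in $\max(\rel)$ cannot fix a tuple outside $\max(\rel)$, so "we may assume" is not justified here. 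This matters for the intended application in Lemma~\ref{lem:inductive}, where after the first inductive step the tuple to be fixed is a solution $\psi$ whose value at $v$ lies in a $\beta_v$-block $B$ that is generally not the maximal one. The paper keeps the original $\ba$ throughout and applies the correction $\mal(\bh(x),\bh(\ba),\ba)$ directly to it.

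Second, your treatment of condition (C2) comes in the wrong order. You first build the Mal'tsev-corrected polynomial and only afterwards propose "one further composition" to shrink $f_j(\zA_j)$ to a $(\gm,\dl)$-minimal set for each non-separable $(j,\gm,\dl)$. That extra composition is an arbitrary polynomial at coordinate $j$ extended to $\rel$; nothing guarantees it still fixes $\ba$ or keeps $f_i(b)\eqc\al b$, so the properties you fought to establish can be destroyed, and you would have to re-apply the Mal'tsev correction, potentially losing (C2) again. The paper resolves this circularity by choosing the pre-correction polynomial $\bh$ to have the \emph{smallest image} among all idempotent polynomials satisfying the separation conditions; (C2) then follows by contradiction (if some $h_j(\zA_j)$ were not $(\gm,\dl)$-minimal one could shrink the image further), and only \emph{then} is the single Mal'tsev correction applied, with a short check that it preserves both (C1) and the minimality of the images at every coordinate. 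You need either this minimal-image argument or an explicit verification that your final patching step preserves $\bbf(\ba)=\ba$ and $f_i(b)\eqc\al b$; as written, neither is present.
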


\begin{proof}
First, we find an $(\al,\beta)$-collapsible polynomial. For every 
$(j,\gm,\dl)\in\cI_\rel$ such that $(\al,\beta)$
can be separated from $(\gm,\dl)$ there is an idempotent polynomial 
$\bg^{j\gm\dl}$ such that $g^{j\gm\dl}_j(\dl)\sse\gm$, but 
$g^{j\gm\dl}_i(\beta)\not\sse\al$. Moreover, we may assume by 
Lemma~\ref{lem:selected} that for every $\bg^{j\gm\dl}$, 
$g^{j\gm\dl}_i(\zA_i)=U$ for the same $(\al,\beta)$-minimal set $U$.
Composing all such polynomials we obtain a polynomial $\bh$ such that
$h_i(\zA_i)=U$, and so $h_i(\beta)\not\sse\al$, and $h_j(\dl)\sse\gm$
for any $j,\gm,\dl$ as above. By iterating $\bh$ can be assumed idempotent. 
Choose $\bh$ to have the smallest image among unary idempotent polynomials 
such that $h_i(\zA_i)$ is an $(\al,\beta)$-minimal set and $h_j(\dl)\sse\gm$ 
for any $(j,\gm,\dl)\in\cI_\rel$ 
such that $(\al,\beta)$ can be separated from $(\gm,\dl)$.

Suppose now that for some $(j,\gm,\dl)\in\cI_\rel$ such that the interval 
$(\al,\beta)$ cannot be
separated from $(\gm,\dl)$ the set $U'=h_j(\zA_j)$ is not a 
$(\gm,\dl)$-minimal set. Then, since $h_j(\dl)\not\sse\gm$, the set
$U'$ contains a $(\gm,\dl)$-minimal set $U''$ by 
Lemma~\ref{lem:minimal-sets}(4). Let $g$ be an idempotent polynomial
of $\zA_j$ with $g(\zA_j)=U''$ and $\bg$ its extension to a polynomial
of $\rel$. Then $\bh'=\bg\circ\bh$ satisfies the following conditions: \\[1mm]
-- $h'_j(\zA_j)=U''$ and $h'_j(\dl)\not\sse\gm$;\\[1mm]
-- $h'_i(\beta)\not\sse\al$, because $(\al,\beta)$ cannot be separated from 
$(\gm,\dl)$;\\[1mm]
-- $|\bh'(\rel)|<|\bh(\rel)|$.\\[1mm]
Iterating $\bh'$ it can be assumed idempotent. Then the last property contradicts
the choice of $\bh$. Therefore $\bh$ is $(\al,\beta)$-collapsing.

Let $\al_i=\al,\beta_i=\beta$, and for $j\in[n]-\{i\}$ let 
$\al_j=\beta_j=\th_j$. 
It is not hard to see that $\ov\al\preceq\ov\beta$. Indeed, suppose 
$\eta\in\Con(\rel)$ is such that $\ov\al<\eta\le\ov\beta$ and let $i=n$.
Then there are $(\bc,c),(\bd,d)\in\rel$ such that $\ang{(\bc,c),(\bd,d)}\in\eta$
such that $\ang{\bc,\bd}\in\ov\al_{[n-1]}$ and $\ang{c,d}\in\beta-\al$.
We show that for any $\ang{(\bc',c'),(\bd',d')}\in\ov\beta$ we have  
$\ang{(\bc',c'),(\bd',d')}\in\eta$. In fact, by Proposition~\ref{pro:basics} 
it suffices to show that 
$\ang{(\bc'',c''),(\bd'',d'')}\in\eta$ for some $\bc'',\bd''\in\pr_{[n-1]}\rel'$
where $\rel'=\max(\rel)$ and $\ang{\bc'',\bd''}\in\ov\al_{[n-1]}$,
and some $c'',d''\in\max(\zA_n)$ with $c''\eqc\al c'$, $d''\eqc\al d'$. 
Since $\rel'$ is a Mal'tsev algebra by Lemma~\ref{lem:Maltsev-minimal-sets} 
applied to conguences $\al\prec\beta$ there is a polynomial $\bbf$ of 
$\rel$ such that $c''=f_n(c)\eqc\al c', d''=f_n(d)\eqc\al d'$ and 
$\bbf(\rel)\sse\rel'$. Let $\bc''=\bbf_{[n-1]}(\bc),\bd''=\bbf_{[n-1]}(\bd)$.
Then $\ang{(\bc'',c''),(\bd'',d'')}\in\eta$. Also, since $\beta\ne\al$, we have 
$\ov\al\prec\ov\beta$.

By Lemma~\ref{lem:Maltsev-minimal-sets} there is an $(\al,\beta)$-minimal
set $U$ such that $\ba[i]^\al\cap U,b^\al\cap U\ne\eps$. Moreover, 
an $(\al,\beta)$-collapsing polynomial $\bh$ can be chosen such that 
$h_i(\zA_i)=U$. Then set $\bbf(x)=\mal(\bh(x),\bh(\ba),\ba)$.
For the polynomial $\bbf$ we have:\\[1mm]
-- $\bbf(\ba)=\mal(\bh(\ba),\bh(\ba),\ba)=\ba$;\\[1mm]
-- $c=f_i(b)=\mal(h_i(b),h_i(\ba[i]),\ba[i])\eqc\al
\mal(h_i(b),\ba[i],\ba[i])=h_i(b)\eqc\al b$, because, since $\bh$ is idempotent,
$h_i(\ba[i])\eqc\al\ba[i]$ and $h_i(b)\eqc\al b$;\\[1mm]
-- for any $(j,\gm,\dl)\in\cI_\rel$ such that and $(\al,\beta),(\gm,\dl)$ 
can be separated, $f_j(\dl)\sse\gm$.\\[1mm]
By iterating $\bbf$ we obtain an idempotent polynomial $\bbf'$ that satisfies
all the conditions above. Indeed, the first and third conditions are 
straightforward, while the second one follows from the equality
$f_i(c)\eqc\al c$. Finally, for any $(j,\gm,\dl)\in\cI_\rel$ such 
that $(\al,\beta),(\gm,\dl)$ cannot
be separated we have $f'_j(\dl)\not\sse\gm$, because $f'_i(\beta)\not\sse\al$.
Also, $f'_j(\zA_j)$ is a $(\gm,\dl)$-minimal set, because $h_j(\zA_j)$ 
is a one. 

Thus, $\bbf'$ satisfies all the required conditions. The lemma is proved.
\end{proof}

\subsection{Splits and alignments}\label{sec:splits}

In this section we present a sufficient condition for two prime intervals to be 
separated. As we shall see using this condition certain projections of a relation 
can be partitioned into a small number of subdirect products of smaller algebras.

Let $\rel$ be a subdirect product of $\zA_1\tms\zA_n$, 
$\al_i,\beta_i\in\Con(\zA_i)$, $i\in[n]$, such that 
$\al_i\prec\beta_i\le\th_{\zA_i}$. An element $a\in\zA_i$,
$i\in[n]$, is called \emph{$\al_i\beta_i$-split} if there is a
$\beta_i$-block $B$ and $b,c\in B$ such that $ab\not\eqc{\al_i} ac$. Note that 
no element from $\max(\zA_i)$ is $\al_i\beta_i$-split, while the
minimal element is $\al_i\beta_i$-split.
We say that $i,j\in[n]$ are not \emph{$\ov\al\ov\beta$-aligned}
if there is $\ba\in\rel$ such that $\ba[i]$ is not $\al_i\beta_i$-split and
$\ba[j]$ is $\al_j\beta_j$-split, or the other way round.

\begin{lemma}\label{lem:align-coherence}
If $i,j$ are not $\ov\al\ov\beta$-aligned then $(\al_i,\beta_i)$ can be 
separated from $(\al_j,\beta_j)$. 
\end{lemma}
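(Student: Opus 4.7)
The strategy is to unpack the definition of split into a statement about a very specific unary polynomial of $\rel$, namely left multiplication by a fixed tuple. The key observation is that for any element $a\in\zA_i$, the unary polynomial $g(x)=a\cdot x$ satisfies $g(\beta_i)\sse\al_i$ if and only if $a$ is \emph{not} $\al_i\beta_i$-split: indeed, $a$ being non-split says exactly that for every $\beta_i$-block $B$ and all $b,c\in B$, $ab\eqc{\al_i}ac$, which is the definition of $g(\beta_i)\sse\al_i$. Conversely, $a$ being split produces $b,c$ in a common $\beta_i$-block with $ab\not\eqc{\al_i}ac$, so $g(\beta_i)\not\sse\al_i$.

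With this in hand, take a tuple $\ba\in\rel$ that witnesses non-alignment. By the definition of non-alignment, one of the two coordinates of $\ba$ (say $\ba[k]$, $k\in\{i,j\}$) is split while the other is not. Define the unary polynomial $\bbf$ of $\rel$ by $\bbf(x)=\ba\cdot x$; since $\cdot$ is a basic operation and $\rel$ is a subalgebra, this is indeed a polynomial of $\rel$ in the sense of Section~\ref{sec:congruences}, and its $i$-th and $j$-th coordinate polynomials are $f_i(x)=\ba[i]\cdot x$ and $f_j(x)=\ba[j]\cdot x$. The key observation then directly reads off what $\bbf$ does to each prime interval.

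Specifically, suppose first that $\ba[i]$ is split and $\ba[j]$ is not. Then by the observation, $f_i(\beta_i)\not\sse\al_i$ while $f_j(\beta_j)\sse\al_j$, which is precisely the defining property of $\bbf$ separating $(\al_i,\beta_i)$ from $(\al_j,\beta_j)$, so we are done. In the opposite case, where $\ba[j]$ is split and $\ba[i]$ is not, the same polynomial $\bbf$ instead separates $(\al_j,\beta_j)$ from $(\al_i,\beta_i)$; one then applies Lemma~\ref{lem:symmetric} (valid since $\beta_i\le\th_i$ and $\beta_j\le\th_j$) to flip the direction and obtain separation of $(\al_i,\beta_i)$ from $(\al_j,\beta_j)$, as required.

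There is no substantive obstacle here: the lemma is essentially a direct translation between the combinatorial condition on left multiplication (splitness) and the polynomial-theoretic condition (separation), with the only bookkeeping being a single appeal to symmetry to cover both orientations of non-alignment.
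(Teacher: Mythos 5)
Your proof is correct and takes essentially the same route as the paper: the separating polynomial is left multiplication by the witnessing tuple $\ba$, and splitness (resp.\ non-splitness) of a coordinate of $\ba$ is, by definition, exactly the statement that the corresponding coordinate polynomial fails to collapse (resp.\ collapses) the prime interval. The only differences are cosmetic: the paper first composes with right multiplication by a tuple from $\max(\rel)$ so that the polynomial lands in $\max(\rel)$ (conforming to the standing convention after Lemma~\ref{lem:max-separate}), which is not needed for bare separability, and your explicit appeal to Lemma~\ref{lem:symmetric} to handle the reversed orientation of non-alignment is a detail the paper leaves implicit.
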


\begin{proof}
It suffices to consider the case $n=2$, $i=1$, $j=2$. Let
$(a,b)\in\rel$ be such that $a$ is $\al_i\beta_i$-split, while $b$ is not 
$\al_j\beta_j$-split. Let also $(c,d)\in\rel'=\max(\rel)$.
Consider operation $\bbf((x_1,x_2))=(a,b)\cdot((x_1,x_2)\cdot(c,d))$. 
We claim that $f_1(\beta_1)\not\sse\al_1$ while $f_2(\beta_2)\sse\al_2$.

First, observe that all the values of the operation 
$g((x_1,x_2))=(x_1,x_2)\cdot(c,d)$ belong to $\max(\rel)$, and 
$g((x_1,x_2))=(x_1,x_2)$ for any $(x_1,x_2)\in\max(\rel)$. Then, for any  
$\beta_2$-block $B_2$ and any $a',b'\in B_2$ we have 
$f_2(a')=b(a'd)\eqc{\al_2}b(b'd)=f_2(b')$, as
$b$ is not $\al_2\beta_2$-split. Thus $f_2(\beta_2)\sse\al_2$. On the 
other hand, since $a$ is $\al_1\beta_1$-split, there is a $\beta_1$-block 
$B_1$ and $a'',b''\in B_1$ such that 
$f_1(a'')=a(a''c)=aa''\not\eqc{\al_1} ab''=a(b''c)=f_1(b'')$. The second and the 
second last equalities hold because, as $\beta_1\sse\th_1$ and 
$B_1$ is a nontrivial $\beta_1$-block, we have $B_1\sse\max(\zA_1)$.
Therefore $f_1(\beta_1)\not\sse\al_1$.
\end{proof}

\section{From relations to instances}\label{sec:to-instances}

Here we apply the results of the previous section to CSP instances. 
In particular, we introduce coherent sets of an instance and show that
if an instance has solutions on every coherent set, which are consistent
in some weak sense, then the entire instance has a solution.

Let $\cP=(V,\cA,\cC)$ be a 3-minimal instance of $\CSP(\gA)$. We assume that
the domain $\zA_v$ of each variable $v\in V$ is the set of solutions $\cS_v$, 
and so the constraint relations are subdirect products of the domains.

Since separation of prime intervals depends only on binary projections of a 
relation, it can be defined for 3-minimal instances as well. More precisely, 
let $\cI_\cP$ (or just $\cI$ if $\cP$ is clear from the context) be the set of all 
triples $(v,\al,\beta)$, where $v\in V$,
$\al,\beta\in\Con(\zA_v)$ are such that $\al\prec\beta\le\th_v$.
Let $(v,\al,\beta),(w,\gm,\dl)\in\cI$; we say that $(\al,\beta)$ cannot separated 
from $(\gm,\dl)$ if this is the case for 
$\cS_{vw}$. Due to 3-minimality --- we can consider sets of solutions on 3 
variables --- this relation is transitive. It is also reflexive and symmetric
by Lemma~\ref{lem:symmetric}. 

Next we define two partitions of a CSP instance $\cP$. The first one, 
link partition allows us to reduce solving subinstances of $\cP$ to instances 
over smaller domains. The second one provides a sufficient condition
to have a link partition and is defined through alignment properties.

Let again $\cP=(V,\cA,\cC)$ be a 3-minimal instance of $\CSP(\gA)$. Partitions 
$A_{v1}\cup\ldots\cup A_{vk_v}=\zA_v$ for $v\in V$ are called a 
\emph{link partition} if the following condition holds:
\begin{itemize}
\item
For any $v,w\in V$, $k_v=k_w$, 
and there is a bijection $\vf_{vw}:[k_v]\to[k_w]$ such that for any 
$(a,b)\in\cS_{vw}$ and any $j\in[k_v]$, $a\in A_{vj}$ if and only if 
$b\in A_{w\vf_{vw}(j)}$.
\end{itemize}

Observe that, since $\cP$ is 3-minimal, the mappings $\vf_{vw}$ are 
consistent, that is, for any $u,v,w\in V$ it holds that 
$\vf_{vw}\circ\vf_{uv}=\vf_{uw}$. Without loss of generality we will 
assume that $\vf_{vw}$ is an identity mapping. 

As is easily seen the partition $A_{v1}\cup\ldots\cup A_{vk_v}=\zA_v$
defines a congruence of $\zA_v$. In particular, each of $A_{vi}$ is a 
subalgebra of $\zA_v$.

Let $\al_v,\beta_v\in\Con(\zA_v)$ for $v\in V$ be such that 
$\al_v\prec\beta_v\le\th_v$.
Variables $v,w\in V$ are \emph{$\ov\al\ov\beta$-aligned} if they are 
$\ov\al\ov\beta$-aligned in $\cS_{vw}$. 
In the following lemma we assume that every domain $\zA_v$ of $\cP$
either has a minimal element, or $\sg_{\zA_v}$ is the full congruence,
and so $\zA_v$ is a Mal'tsev algebra.

\begin{lemma}\label{lem:coherent-link-partition}
(1) If variables $v,w\in V$ of an instance $\cP=(V,\cA,\cC)$ are 
$\ov\al\ov\beta$-aligned and $\zA_v$ has a minimal element then 
$\zA_w$ also has a minimal element.\\
(2) If every domain of an instance $\cP=(V,\cA,\cC)$ has a minimal element 
and any two variables $v,w\in V$ are $\ov\al\ov\beta$-aligned, then $\cP$
has a link partition.
\end{lemma}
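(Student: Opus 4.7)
The plan is to exhibit the link partition explicitly by partitioning each domain into its \emph{split} and \emph{non-split} elements with respect to $(\al_v,\beta_v)$; the definition of $\ov\al\ov\beta$-alignment is tailored precisely so that this bipartition is respected by the binary projections $\cS_{vw}$.

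For part~(1), let $m_v$ denote the minimal element of $\zA_v$. First I would verify that $m_v$ is always $\al_v\beta_v$-split: since $\al_v\prec\beta_v$, there is a $\beta_v$-block $B$ and elements $b,c\in B$ with $(b,c)\in\beta_v\setminus\al_v$, and the identity $m_v\cdot x=x$ gives $(m_v b,m_v c)=(b,c)\notin\al_v$. By 1-minimality $\cS_v=\zA_v\ni m_v$, and 3-minimality yields some $(m_v,a)\in\cS_{vw}$. Alignment of $v,w$ then forces $a$ to be $\al_w\beta_w$-split. However, if $\zA_w$ were a Mal'tsev algebra (the alternative to having a minimal element under Corollary~\ref{cor:maroti}), then $\sg_{\zA_w}$ would be the full congruence $\zo_{\zA_w}$, so $\zA_w=\max(\zA_w)$ would be a single $\sg$-block on which $\cdot$ is the first projection, i.e.\ $x\cdot y=x$ for all $x,y\in\zA_w$. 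Consequently no element of $\zA_w$ could be split, contradicting the splitness of $a$. Hence $\zA_w$ must have a minimal element.

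For part~(2), the candidate link partition is $\zA_v=A_v^1\cup A_v^2$, where $A_v^1$ is the set of $\al_v\beta_v$-split elements of $\zA_v$ and $A_v^2=\zA_v\setminus A_v^1$, with $k_v=2$ uniformly and $\vf_{vw}$ the identity. Both parts are nonempty at every $v$: $A_v^1\ni m_v$ by the argument in part~(1), and $A_v^2\supseteq\max(\zA_v)$. The latter inclusion uses that $\beta_v\le\th_v$ forces every nontrivial $\beta_v$-block into $\max(\zA_v)$, and that $\cdot$ acts as the first projection on the $\sg$-block $\max(\zA_v)$ (since $f\red B$ is the first projection on each $\sg$-block), so for any $a\in\max(\zA_v)$ and $b,c$ in any nontrivial $\beta_v$-block we get $a\cdot b=a=a\cdot c$, making $a$ non-split. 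With both parts nonempty everywhere, $k_v=k_w=2$ uniformly, and the required condition ``$a\in A_v^j$ iff $b\in A_w^j$ for every $(a,b)\in\cS_{vw}$'' is literally the statement that $v$ and $w$ are $\ov\al\ov\beta$-aligned.

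The main obstacle is the structural step in~(1): establishing that a Mal'tsev SBM domain admits no split element and linking this to the splitness of $m_v$ via alignment. Once that is in hand, part~(2) reduces to unwinding definitions, with the only real content being the nonemptiness check for $A_v^1$ and $A_v^2$ at every variable.
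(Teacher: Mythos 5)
Your proposal is correct and follows essentially the same route as the paper: the link partition is exactly the paper's partition of each $\zA_v$ into the set $L_v$ of $\al_v\beta_v$-split elements and its complement $N_v$, the nonemptiness facts you verify (the minimal element is split, elements of $\max(\zA_v)$ are not, and a Mal'tsev domain has no split elements since $\cdot$ is the first projection there) are precisely the observations the paper invokes from the remark preceding Lemma~\ref{lem:align-coherence}, and alignment is read off as the link-partition condition via subdirectness of $\cS_{vw}$ in the same way. The additional detail you supply in checking those observations is welcome but does not change the argument.
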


\begin{proof}
For every $v\in V$ let $L_v$ denote the set of $\al_v\beta_v$-split elements of 
$\zA_v$ and let $N_v$ denote the set of $\al_v\beta_v$-non-split elements. 
As we observed before Lemma~\ref{lem:align-coherence}, both sets are 
nonempty if $\zA_v$ has a minimal element, and $L_v=\eps$ if $\zA_v$ is
a Mal'tsev algebra.

(1) If $\zA_w$ is a Mal'tsev algebra then $v,w$ cannot be 
$\ov\al\ov\beta$-aligned since $L_w=\eps$, while $L_v,N_v\ne\eps$,
and $\cS_{vw}$ is a subdirect product.

(2) For any $v,w\in V$ and any pair $(a,b)\in\cS_{vw}$, 
$a\in L_v$ if and only if $b\in L_w$. Therefore $\cS_{vw}$ is
link-partitioned, as well as $\rel$ for any constraint
$C=\ang{\bs,\rel}\in\cC$.
\end{proof}

\section{The algorithm}\label{sec:algorithm}

In the first part of this section we introduce the property of block-minimality, 
the key property of CSP instances for our algorithm. We also prove
that block-minimality can be efficiently established. Then in the second part
we show that block-minimality is sufficient for the existence of a solution,
Theorem~\ref{the:block-minimal}, which is the main result of this section,
and provides a polynomial time algorithm for CSPs over SBM algebras.

\subsection{Block-minimality}\label{sec:block-minimality}

Let $\cP=(V,\cA,\cC)$ be a 3-minimal instance such that for every its
domain $\zA_v$ either $\sg_{\zA_v}$ is the full congruence, and so 
$\zA_v$ is a Mal'tsev algebra with Mal'tsev operation $\mal$, or $\zA_v$ has
a minimal element. 

Recall that $\cI_\cP$ or just $\cI$ denotes the set of all triples $(v,\al,\beta)$, 
where $v\in V$,
$\al,\beta\in\Con(\zA_v)$ are such that $\al\prec\beta\le\th_v$.
For a triple $(v,\al,\beta)\in\cI$ by $\cI(v,\al,\beta)$ we denote 
the set of all triples $(w,\gm,\dl)\in\cI$ such that $(\al,\beta)$ 
cannot be separated from $(\gm,\dl)$. Also, by $W_{v\al\beta}$ we
denote the set $\{w\mid (w,\gm,\dl)\in\cI(v,\al,\beta)\}$. Sets of
the form $W_{v\al\beta}$ are called \emph{coherent sets}.

Instance $\cP$ is said to be \emph{block-minimal} if for any 
$(v,\al,\beta)\in\cI$ the instance $\cP_{W_{v\al\beta}}$ is minimal.

In the next section we prove, Theorem~\ref{the:block-minimal}, that 
every block-minimal instance has a solution. To show that 
Theorem~\ref{the:block-minimal} gives rise to a polynomial-time 
algorithm for $\CSP(\gA)$ we need to show how block-minimality can be
established. We prove that establishing block-minimality can be reduced 
to solving polynomially many smaller instances of $\CSP(\gA)$.

\begin{prop}\label{pro:to-block-minimal}
Transforming an instance $\cP=(V,\cA,\cC)\in\CSP(\gA)$ to a block-minimal 
instance can be reduced to
solving polynomially many instances $\cP'=(V',\cA',\cC')\in\CSP(\gA)$ such 
that $V'\sse V$ and either $\zA'_v$ is a Mal'tsev algebra for all $v\in V'$, or
$|\zA'_v|<|\zA_v|$ for all $v\in V'$.
\end{prop}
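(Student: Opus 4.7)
The plan is to enforce block-minimality by the standard minimality loop and then analyze the resulting sub-problems. Concretely, I iterate over every $(v,\al,\beta)\in\cI$ (polynomially many) and, for each, over every constraint of $\cP_{W_{v\al\beta}}$ and every tuple $\ba$ in that constraint. For each such pair I try to extend $\ba$ to a solution of $\cP_{W_{v\al\beta}}$; if no extension exists, $\ba$ is deleted and the loop restarts. An extension test amounts to solving $\cP_{W_{v\al\beta}}$ augmented with constant unary constraints $\rel_{\ba[v]}$ for $v$ in the scope of the constraint; since $\gA$ is idempotent these constant relations are subalgebras, so the augmented instance still lies in $\CSP(\gA)$. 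The total number of such tests is polynomial in the size of $\cP$. What remains is to show that every such extension test fits one of the two alternatives of the proposition.

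The structural heart of the argument is the claim that \emph{within any coherent set $W_{v\al\beta}$, either every domain $\zA_w$, $w\in W_{v\al\beta}$, is Mal'tsev, or every $\zA_w$ has a minimal element}. For each $w\in W_{v\al\beta}$ I fix some $(\gm_w,\dl_w)$ with $(w,\gm_w,\dl_w)\in\cI(v,\al,\beta)$, so $(\gm_w,\dl_w)$ and $(\al,\beta)$ cannot be separated. Symmetry (Lemma~\ref{lem:symmetric}) and transitivity of the ``cannot be separated'' relation then yield that $(\gm_u,\dl_u)$ and $(\gm_w,\dl_w)$ cannot be separated for any $u,w\in W_{v\al\beta}$; by the contrapositive of Lemma~\ref{lem:align-coherence} this forces $u$ and $w$ to be $\ov\gm\ov\dl$-aligned in $\cS_{uw}$. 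Were some $\zA_w$ Mal'tsev while another $\zA_u$ had a minimal element, Lemma~\ref{lem:coherent-link-partition}(1) would force non-alignment, a contradiction, which proves the claim.

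In the all-Mal'tsev case the augmented instance already has $V'=W_{v\al\beta}\sse V$ and every $\zA'_w$ Mal'tsev, matching the first alternative of the proposition. In the minimal-element case, Lemma~\ref{lem:coherent-link-partition}(2) applied with the chosen intervals $(\gm_w,\dl_w)$ produces a link partition $\zA_w=L_w\cup N_w$ for each $w\in W_{v\al\beta}$. Both parts are nonempty --- the minimal element of $\zA_w$ lies in $L_w$ and every element of $\max(\zA_w)$ lies in $N_w$ --- and each is a proper subalgebra, so $|L_w|,|N_w|<|\zA_w|$. The fixed tuple $\ba$ lies on one side of the partition at each of its coordinates, and the consistency of the bijections $\vf_{uw}$ on a 3-minimal instance forces $\ba$ to lie on the same side $j\in\{1,2\}$ at every coordinate of its scope. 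By the link-partition property every solution of the augmented instance must then take values in $A_{wj}\in\{L_w,N_w\}$ at every $w\in W_{v\al\beta}$, so the extension test reduces to the single sub-instance over the strictly smaller domains $A_{wj}$, fitting the second alternative.

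The step I expect to be the most delicate is verifying that the various local choices --- which interval $(\gm_w,\dl_w)$ is used for each $w$, and which coherent set is being processed at any given moment --- combine consistently so that the resulting procedure really produces a block-minimal instance after polynomially many iterations, and that the subalgebras $A_{wj}$ are indeed in $\gA$ (using closure of $\gA$ under subalgebras). This should follow from the 3-minimality hypothesis and the fact that non-separability is an equivalence on $\cI$, but a careful bookkeeping of which intervals witness what will be required.
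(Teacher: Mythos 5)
Your proposal follows essentially the same route as the paper's own proof: establish minimality of each $\cP_{W_{v\al\beta}}$ by the standard tuple-extension loop, observe that within a coherent set non-separability forces $\ov\al\ov\beta$-alignment (contrapositive of Lemma~\ref{lem:align-coherence}), so by Lemma~\ref{lem:coherent-link-partition} either all domains are Mal'tsev or the instance link-partitions into sub-instances over strictly smaller subalgebras. The one step you leave implicit is the preprocessing via Maroti's reduction (Corollary~\ref{cor:maroti}): your dichotomy ``every domain in $W_{v\al\beta}$ is Mal'tsev or every domain has a minimal element'' is not exhaustive for an arbitrary instance of $\CSP(\gA)$, since an SBM domain need be neither; both your case analysis and the hypotheses of Lemma~\ref{lem:coherent-link-partition} require that each $\zA_w$ is already Mal'tsev or has a minimal element, which is exactly what the paper secures by first invoking Corollary~\ref{cor:maroti}. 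With that preprocessing added, the argument matches the paper's.
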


Since the cardinalities of algebras in $\gA$ are bounded, the depth of recursion
when establishing block-minimality is also bounded. Therefore,  together with
Theorem~\ref{the:block-minimal} this proposition gives a polynomial 
time algorithm for $\CSP(\gA)$.

\begin{proof}
Using the standard propagation algorithm and Maroti's reduction 
(Section~\ref{sec:maroti}) we may assume that $\cP$ is 3-minimal and every 
$\zA_v$ is either Mal'tsev or has a minimal element. Take $(v,\al,\beta)\in\cI$ 
as in the definition of block-minimality. We need to show how to make
problems $\cP_{W_{v\al\beta}}$ minimal. If every $\zA_w$ for 
$w\in W_{v\al\beta}$ is Mal'tsev, $\cP_{W_{v\al\beta}}$ can be made 
minimal using the algorithm from \cite{Bulatov06:simple}.
If $\zA_w$ has a minimal element for some $w\in W_{v\al\beta}$ then 
set $\al_v=\al,\beta_v=\beta$, and for each $w\in W_{v\al\beta}$ choose
$\al_w,\beta_w$ in such a way that $(w,\al_w,\beta_w)\in\cI(v,\al,\beta)$.
Then by Lemma~\ref{lem:coherent-link-partition} and~\ref{lem:align-coherence} 
$\cP_{W_{v\al\beta}}$ is link partitioned, that is,
it is a disjoint union of instances $\cP_1\cup\dots\cup\cP_m$, where 
$\cP_i=(W_{v\al\beta},\cA^i,\cC^i)$ are such that 
$\zA_w=\zA_w^1\cup\dots\cup\zA_w^m$
is a disjoint union. We then transform them to minimal instances separately.

If at any stage there is a tuple from a constraint relation that does not extend to
a solution of a certain subinstance, we tighten the original problem $\cP$ and
start all over again. Observing that the set tuples from a constraint relation that 
can be extended to a solution of the subinstance is a subalgebra, the resulting
instance belongs to $\CSP(\gA)$ as well.
\end{proof}

\subsection{Block-minimality and solutions of the CSP}\label{sec:solutions}

We now prove that block-minimality is a sufficient condition to have a solution.

\begin{theorem}\label{the:block-minimal}
Every block-minimal instance $\cP\in\CSP(\gA)$ with nonempty 
constraint relations has a solution.
\end{theorem}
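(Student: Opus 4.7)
The plan is to prove, by downward induction on tuples $\ov\beta=(\beta_v)_{v\in V}$ in the product lattice $\prod_{v\in V}\Con(\zA_v)$ subject to $\beta_v\le\th_v$, the following strengthened claim: for every such $\ov\beta$ there exists a family of partial solutions $\{\vf_{v\al\beta}\in\cS_{W_{v\al\beta}}\mid(v,\al,\beta)\in\cI\}$ enjoying the \emph{coherence property} that $\vf_{v\al\beta}(u)\eqc{\beta_u}\vf_{w\gm\dl}(u)$ whenever $u\in W_{v\al\beta}\cap W_{w\gm\dl}$. Specialising to the bottom tuple $\ov\beta=\ov\zz$ yields partial solutions that agree pointwise on overlaps, and since each $v\in V$ lies in $W_{v\al\beta}$ (by reflexivity of the relation ``cannot be separated''), the family then glues into a single function on $V$ that is a solution of $\cP$. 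The nonemptiness of each $\cP_{W_{v\al\beta}}$ is delivered by block-minimality.

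\textbf{Base case.} For $\ov\beta=\ov\th$, I would use that by Lemma~\ref{lem:irreducible-maximal} the set $\max(\zA_u)$ is a single $\th_u$-block, so the coherence property reduces to requiring $\vf_{v\al\beta}(w)\in\max(\zA_w)$ for every $w\in W_{v\al\beta}$. To produce such a max-valued solution of $\cP_{W_{v\al\beta}}$, I would lift the canonical max-solution of the semilattice quotient by $\sg$ (Example~\ref{exa:semilattice-minimal}) using block-minimality together with Lemma~\ref{lem:dot-inequality}: pointwise multiplication by a max-element sends values into the maximal $\sg$-block without leaving the constraint relations, which are closed under~$\cdot$.

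\textbf{Inductive step.} Assume the claim holds for $\ov\beta$, and let $\ov\beta'$ be obtained from $\ov\beta$ by replacing $\beta_u$ at exactly one coordinate $u$ with a strictly smaller $\beta'_u\prec\beta_u$. Starting from the family $\{\vf_{v\al\beta}\}$ given by induction, I would fix a reference value $a=\vf_{u\beta'_u\beta_u}(u)$ and modify each $\vf_{v\al\beta}$ with $u\in W_{v\al\beta}$ into a new $\vf'_{v\al\beta}$ whose value at $u$ lies in $a^{\beta'_u}$. The modification is produced by applying, componentwise to $\vf_{v\al\beta}$ viewed as a tuple in its solution relation, a $(\beta'_u,\beta_u)$-collapsing polynomial $\bbf$ furnished by Lemma~\ref{lem:relation-collapse} and chosen to fix the value~$a$ at the $u$-coordinate. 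By condition (C1), every prime interval $(\gm,\dl)$ at a coordinate $w\ne u$ separable from $(\beta'_u,\beta_u)$ is collapsed to $\gm\le\beta_w=\beta'_w$, hence any pre-existing agreement modulo $\beta_w$ at $w$ survives; by condition (C2), on coordinates inseparable from $(\beta'_u,\beta_u)$ --- exactly those forming $W_{u\beta'_u\beta_u}$ --- the polynomial lands in $(\gm,\dl)$-minimal sets, and the Mal'tsev behaviour of those minimal sets lets one move the $u$-value into $a^{\beta'_u}$ in a controlled way.

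\textbf{Main obstacle.} The hard part will be the \emph{compatibility} of the collapsing polynomials chosen for different coherent sets $W_{v\al\beta}$ and $W_{v'\al'\beta'}$ that share a variable $w$: the new values $\vf'_{v\al\beta}(w)$ and $\vf'_{v'\al'\beta'}(w)$ must still agree modulo $\beta_w=\beta'_w$. The crucial tools here are the symmetry of ``cannot be separated'' (Lemma~\ref{lem:symmetric}) together with its transitivity, which make it an equivalence relation on $\cI$, and the link partition structure of Lemma~\ref{lem:coherent-link-partition}; together these guarantee that the coherent sets containing $w$ are governed by a single equivalence class of prime intervals at $w$, so the collapsing polynomials used in the modification can be synchronised. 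Once this compatibility is established, $\{\vf'_{v\al\beta}\}$ realises the induction hypothesis at $\ov\beta'$ and the induction closes.
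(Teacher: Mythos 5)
Your overall strategy --- downward induction on $\ov\beta$ over families of partial solutions on the coherent sets, anchored at $\ov\th$ by max-valued solutions and driven by the collapsing polynomials of Lemma~\ref{lem:relation-collapse} --- is the same as the paper's. But your inductive invariant is missing an essential clause, and without it the final gluing step fails. You only require that the $\vf_{v\al\beta}$ agree modulo $\beta_u$ on overlaps of coherent sets. At the bottom, where every $\beta_u=\zz_u$, this does give a well-defined map $\psi$ on $V$ whose restriction to $\bs\cap W_{v\al\beta}$ lies in $\pr_{\bs\cap W_{v\al\beta}}\rel$ for each constraint $\ang{\bs,\rel}$ and each coherent set. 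A constraint scope $\bs$, however, need not be contained in any single coherent set, so this only says that $\pr_\bs\psi$ is consistent with the projections of $\rel$ onto the pieces $\bs\cap W_{v\al\beta}$; it does not place $\pr_\bs\psi$ in $\rel$ itself. This is exactly the standard failure mode of local-consistency arguments, and your claim that the family ``glues into a single function on $V$ that is a solution of $\cP$'' does not follow from conditions (1) and (2) alone.

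The paper closes this hole by building a third clause into the definition of an $\ov\gm$-ensemble: for every constraint $\ang{\bs,\rel}$, the tuple assembled from the classes $\vf_{v\al\beta}(u)^{\gm_u}$ must itself belong to $\rel\fac{\ov\gm_\bs}$. This condition is satisfiable at the top (the all-max tuple lies in $\rel\fac{\ov\th_\bs}$ because $\max(\rel)\ne\eps$), it literally says $\pr_\bs\psi\in\rel$ at the bottom, and its preservation under the inductive step requires its own argument --- the last part of the proof of Lemma~\ref{lem:inductive}, where a collapsing polynomial is applied inside $\rel\fac{\ov\beta_\bs}$ and one checks coordinate by coordinate, including the coordinates of $\bs$ \emph{outside} the current coherent set, that the modified tuple stays in the relation. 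You would need to add this condition to your strengthened claim and carry it through both the base case and the inductive step; as written, your induction proves a statement too weak to yield a solution of $\cP$.
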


\begin{proof}
Let $\cP=(V,\cA,\cC)$ be a 3-minimal and block-minimal instance from 
$\CSP(\gA)$, and
such that every domain $\zA_v$ is either a Mal'tsev algebra or has a minimal 
element. We make use of the following construction. Let 
$\gm_v\in\Con(\zA_v)$, $\gm_v\le\th_v$ for $v\in V$. A collection of mappings 
$\cM=\{\vf_{v\al\beta}\mid (v,\al,\beta)\in\cI\}$ is called an 
\emph{$\ov\gm$-ensemble} for $\cP$ if 
\begin{itemize}
\item[(1)] 
for every $(v,\al,\beta)\in\cI$ the mapping $\vf_{v\al\beta}$ is a solution of 
$\cP_{W_{v\al\beta}}$; and 
\item[(2)]
for every $(v,\al,\beta),(w,\gm,\dl)\in\cI$, and any 
$u\in W_{v\al\beta}\cap W_{w\gm\dl}$, it holds 
$\vf_{v\al\beta}(u)\eqc{\gm_u}\vf_{w\gm\dl}(u)$;
\item[(3)]
for any $C=\ang{\bs,\rel}\in\cC$ the tuple $\ba$ where 
$\ba[u]=\vf_{v\al\beta}\fac{\gm_v}$ for $u\in\bs$ and any $(v,\al,\beta)\in\cI$ 
with $u\in W_{v\al\beta}$, belongs to $\rel\fac{\ov\gm_\bs}$.
\end{itemize}
We prove that for any $\gm_v\in\Con(\zA_v)$, $\gm_v\le\th_v$ for $v\in V$
the instance $\cP$ has a $\ov\gm$-ensemble. 

If $\gm_v=\th_v$ for each $v\in V$ then any collection of solutions
$\vf_{v\al\beta}$ of $\cP_{W_{v\al\beta}}$ such that 
$\vf_{v\al\beta}(u)\in\max(\zA_u)$ for all
$(v,\al,\beta)\in\cI$, and $u\in W_{v\al\beta}$, satisfies the conditions of a 
$\ov\gm$-ensemble. Moreover by the block-minimality of $\cP$ such solutions
exist. 

If $\gm_v=\zz_v$ for $v\in V$ then for any $(v,\al,\beta),(w,\gm,\dl)\in\cI$ 
condition (2) implies $\vf_{v\al\beta}(u)=\vf_{w\gm\dl}(u)$
for $u\in W_{v\al\beta}\cap W_{w\gm\dl}$. Let us denote this value by 
$\psi(u)$. Then condition (3) implies that $\psi$ is a solution of $\cP$.

Finally, the inductive step follows from Lemma~\ref{lem:inductive}.
\end{proof}

\begin{lemma}\label{lem:inductive}
Let $\cP=(V,\cA,\cC)\in\CSP(\gA)$ be a 3-minimal and block-minimal instance
such that every $\zA_v$, $v\in V$, either is Mal'tsev or has a minimal element.
Let $v\in V$ and $\beta_w,\gm_w\in\Con(\zA_w)$, $w\in V$, be such that
$\beta_w\preceq\gm_w\le\th_w$, $\beta_v\prec\gm_v$ 
and $\beta_w=\gm_w$ for $w\ne v$. If there is a $\ov\gm$-ensemble for $\cP$
then there is a $\ov\beta$-ensemble for $\cP$.
\end{lemma}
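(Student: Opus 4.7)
The plan is to construct the $\ov\beta$-ensemble $\{\psi_{u\al\dl}\}$ by modifying each member of the given $\ov\gm$-ensemble $\{\vf_{u\al\dl}\}$ only at coordinate $v$, aligning all $v$-values with the $\beta_v$-class of $b := \vf_{v\beta_v\gm_v}(v)$. Because $\beta_w = \gm_w$ for every $w \neq v$, ensemble conditions (2) and (3) at coordinates other than $v$ transfer directly from $\ov\gm$ to $\ov\beta$ as long as the modification does not shift $x$-values outside their original $\gm_x$-classes. The $\ov\gm$-ensemble hypothesis already gives $\vf_{u\al\dl}(v) \eqc{\gm_v} b$ whenever $v \in W_{u\al\dl}$, and Lemma~\ref{lem:min-set} places $b$ and its entire $\gm_v$-block inside the Mal'tsev block $\max(\zA_v)$, which is the feature I will exploit.

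I would split the triples $(u,\al,\dl) \in \cI$ into three cases. If $v \notin W_{u\al\dl}$, set $\psi_{u\al\dl} := \vf_{u\al\dl}$; the $v$-value is irrelevant for this ensemble member. If $(u,\al,\dl) \in \cI(v,\beta_v,\gm_v)$, transitivity of non-separability gives $W_{u\al\dl} = W_{v\beta_v\gm_v}$, so I set $\psi_{u\al\dl} := \chi := \vf_{v\beta_v\gm_v}$, which already places $v \mapsto b$. The principal case is $v \in W_{u\al\dl}$ with $(u,\al,\dl)$ separable from $(v,\beta_v,\gm_v)$. Here I apply Lemma~\ref{lem:relation-collapse} to $\rel := \cS_{W_{u\al\dl}}$ with the triple $(u,\al,\dl)$ to obtain an $(u,\al,\dl)$-collapsing polynomial $\bh$ with $\bh(\ba) = \ba$ for an anchor tuple $\ba \in \rel$ chosen so that $\ba[v] = b$ and $\ba[x] \eqc{\gm_x} \vf_{u\al\dl}(x)$ for every $x \in W_{u\al\dl} \setminus \{v\}$. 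Separability together with Lemma~\ref{lem:symmetric} and clause (C1) of the collapsing property forces $h_v(\gm_v) \sse \beta_v$, and combined with $h_v(b) = b$ this yields $h_v(\vf_{u\al\dl}(v)) \eqc{\beta_v} b$. At every $x \neq v$, the equation $h_x(\ba[x]) = \ba[x]$ together with congruence-preservation of $h_x$ gives $h_x(\vf_{u\al\dl}(x)) \eqc{\gm_x} \ba[x] \eqc{\gm_x} \vf_{u\al\dl}(x)$. Setting $\psi_{u\al\dl} := \bh(\vf_{u\al\dl})$ then produces a solution of $\cP_{W_{u\al\dl}}$ with the correct $\beta_v$-alignment at $v$ and no disturbance of $\gm_x$-classes elsewhere.

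The hard part will be producing the anchor tuple $\ba$: a single solution of $\cP_{W_{u\al\dl}}$ that is $\ov\gm$-equivalent to $\vf_{u\al\dl}$ on $W_{u\al\dl} \setminus \{v\}$ and takes exactly the value $b$ at $v$. Intuitively $\ba$ is a hybrid of $\chi$ (which supplies $v \mapsto b$ on $W_{v\beta_v\gm_v}$) and $\vf_{u\al\dl}$ (which supplies the rest of $W_{u\al\dl}$); its existence should follow from block-minimality of $\cP$ together with condition (3) of the $\ov\gm$-ensemble, which guarantees the candidate hybrid $\ov\gm$-fits every constraint of $\cP_{W_{u\al\dl}}$ and therefore can be lifted to a genuine solution using the $\max$-oriented structure supplied by Lemma~\ref{lem:min-set} and Lemma~\ref{lem:max-separate}. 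Once the anchor is in hand, verification of the $\ov\beta$-ensemble conditions for the new family is routine: (1) holds since each $\psi_{u\al\dl}$ is a value of a polynomial of $\rel$; (2) holds because all $v$-values land in $b^{\beta_v}$ and all $x$-values for $x \neq v$ remain in their old $\gm_x = \beta_x$ classes; and (3) follows by observing that the same collapsing polynomial transports the constraint-witnessing tuples from the $\ov\gm$-ensemble into tuples witnessing membership in the finer $\rel\fac{\ov\beta}$.
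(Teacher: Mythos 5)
There is a genuine gap, and it sits exactly where you flag it: the anchor tuple $\ba$. You need a single solution of $\cP_{W_{u\al\dl}}$ that takes the exact value $b$ at $v$ while staying $\gm_x$-equivalent to $\vf_{u\al\dl}(x)$ at every other $x$, and you justify its existence by saying that the candidate hybrid ``$\ov\gm$-fits every constraint and therefore can be lifted to a genuine solution.'' That inference is precisely the local-to-global step that the whole induction is designed to supply and cannot be assumed: the set $\cS_{W_{u\al\dl}}\cap\prod_x\vf_{u\al\dl}(x)^{\gm_x}$ is a nonempty subalgebra of a product of $\gm$-classes, but nothing forces its projection to $v$ to contain $b$ (it need not even be subdirect), and block-minimality only guarantees extensions of tuples appearing in individual constraint relations, not solutions with values prescribed simultaneously at all variables modulo $\ov\gm$ plus exactly at $v$. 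If the anchor were available, your lemma would essentially already be proved, since the anchor itself is a solution of $\cP_{W_{u\al\dl}}$ with the required $\beta_v$-alignment at $v$ and $\gm$-agreement elsewhere --- so the argument is circular at its core. (There are also smaller unaddressed points: Lemma~\ref{lem:relation-collapse} fixes a tuple whose $u$-entry lies in a nontrivial $\dl$-block, which your anchor is not guaranteed to satisfy, and your one-sentence treatment of ensemble condition (3) ignores that constraint scopes straddle several coherent sets, each modified by a different polynomial.)

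The paper avoids needing any such anchor by using the collapsing polynomial in the opposite way. It takes a $(\beta_v,\gm_v)$-collapsing polynomial $\bbf$ of $\cS_U$ (collapsing with respect to the interval being \emph{refined}, not the interval defining $U$), anchored at the \emph{old} solution $\psi=\vf_{w\gm\dl}$, which certainly lies in $\cS_U$, and chosen so that $B\cap f_v(\zA_v)\ne\eps$. It then applies $\bbf$ to the restriction of $\vf_{v\beta_v\gm_v}$ to $W\cap U$ --- a partial solution that already has its $v$-value in $B$ --- and glues the result with $\psi$ on $U-W$. The gluing is consistent because $\bbf$ collapses every prime interval at coordinates outside $W=W_{v\beta_v\gm_v}$, hence maps $\max(\zA_u)$ to the single point $\psi(u)$ there; this is what replaces your anchor. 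You should rework the principal case along these lines, or else actually prove the existence of your anchor, which I do not see how to do short of reproving the lemma.
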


\begin{proof}
Let $\cM=\{\vf_{w\gm\dl}\mid (w,\gm,\dl)\in\cI\}$ be a $\ov\gm$-ensemble and
$\xi(u)=\vf_{w\gm\dl}(u)^{\gm_u}$ for $u\in W_{w\gm\dl}$. 
By condition (2) for 
$\ov\gm$-ensembles this definition is consistent. If $\xi(v)$ is a 
$\gm_v$-block that is equal to an $\beta_v$-block, then $\cM$ is also
a $\ov\beta$-ensemble, and there is nothing to prove.

Otherwise let $B$ be the $\beta_v$-block 
containing $\vf_{v\al\beta}(v)$. We show that for every 
$(w,\gm,\dl)\in\cI$ with $v\in W_{w\gm\dl}$ a solution $\vf'_{w\gm\dl}$ can be
found such that $\vf'_{w\gm\dl}(v)\in B$ and 
$\vf'_{w\gm\dl}(u)\eqc{\gm_u}\vf_{w\gm\dl}(u)$. Then, 
setting $\vf'_{w\gm\dl}=\vf_{w\gm\dl}$ for $(w,\gm,\dl)\in\cI$ such that 
$v\not\in W_{w\gm\dl}$ and $\cM'=\{\vf'_{w\gm\dl}\mid (w,\gm,\dl)\in\cI\}$ 
we conclude that $\cM'$ is a $\ov\beta$-ensemble.

Let $(w,\gm,\dl)\in\cI$ be such that $v\in W_{w\gm\dl}$, and let 
$W=W_{v\al\beta}$, $U=W_{w\gm\dl}$, $\vf=\vf_{v\al\beta}\red{W\cap U}$, 
$\psi=\vf_{w\gm\dl}$. Note that in this 
notation $\cS_W$, $\cS_U$, and $\cS_{W\cap U}$ are the sets of solutions 
of $\cP_{W_{v\al\beta}}$, $\cP_{W_{w\gm\dl}}$, and 
$\cP_{W_{v\al\beta}\cap W_{w\gm\dl}}$. It will 
often be convenient for us to treat these sets as relations rather than 
sets of solutions of a CSP. Then 
$\pr_{W\cap U}\cS_W,\pr_{W\cap U}\cS_U\sse\cS_{W\cap U}$, 
and so $\vf,\pr_{W\cap U}\psi\in\cS_{W\cap U}$. 

Let $\bbf$ be a $(\beta_v,\gm_v)$-collapsing polynomial of $\cS_U$. By 
Lemma~\ref{lem:relation-collapse} it can be selected such that 
$\psi\in\bbf(\cS_U)$
and $B\cap f_v(\zA_v)\ne\eps$. Let $\pi=\bbf_{W\cap U}(\vf)$. We show that 
the mapping $\vf'$ on $U$ given by $\vf'(u)=\pi(u)$ for $u\in W\cap U$, and
$\vf'(u)=\psi(u)$ for $u\in U-W$ is a solution from $\cS_U$. Since $\vf(v)\in B$
and  $B\cap f_v(\zA_v)\ne\eps$, that is, $f_v(B)\sse B$ as $\bbf$ is idempotent, 
we have $\pi(v)=f_v(\vf(v))\in B$. Also, as for every $u\in (W\cap U)-\{v\}$, 
we have 
$$
\vf'(u)=\pi(u)=f_u(\vf(u))\eqc{\beta_u}f_u(\psi(u))=\psi(u).
$$
Therefore, $\vf'$ satisfies condition (2) of $\ov\beta$-ensembles for $w,j$.

Now we prove that $\vf'$ is a solution from $\cS_U$. Let $C=\ang{\bs,\rel}$
be a constraint from $\cP_U$, $W'=\bs\cap W$ and $\ba=\pr_{W'}\vf$.
Then, since $\vf$ is a solution from $\cS_{W\cap U}$, there is $\bb\in\rel$
with $\ba=\pr_{W'}\bb$. Let $\bc=\bbf_\bs(\bb)$, clearly, $\bc\in\rel$.
For the tuple $\bc$ we have:\\[1mm]
-- $\bc[u]=f_u(\ba[u])=f_u(\vf(u))=\vf'(u)$ for $u\in W'$;\\[1mm]
-- $\bc[u]=f_u(\bb[u])=\psi(u)$ for $u\in\bs-W'$, because in this case
$f_u(\th_u)\sse\zz_u$, and therefore, as $f_u(\psi(u))=\psi(u)$, 
we have $f_u(\max(\zA_u))=\{\psi(u)\}$.\\[1mm]
Thus, $\bc=\pr_\bs\vf'$, and thus $\vf'$ is a solution from $\cS_{W\cap U}$.

So far we have defined mappings $\vf'_{w\gm\dl}$, proved that they are solutions
of the respective subinstances, that is, condition (1), and that they are 
consistent modulo $\ov\beta$, that is, condition (2). It remains to verify 
condition (3). Let $C=\ang{\bs,\rel}\in\cC$ and 
$\xi(u)=\vf_{w\gm\dl}(u)^{\beta_u}$, 
$\xi'(u)=\vf'_{w\gm\dl}(u)^{\beta_u}$ for 
$u\in V$ and any $(w,\gm,\dl)\in\cI$, such that $u\in W_{w\gm\dl}$. We need 
to show that $\pr_\bs\xi'\in\rel'=\rel\fac{\ov\beta_\bs}$.

We use a simplified version of the argument above. Let $W'=W\cap\bs$. If 
$v\not\in\bs$, the result follows from condition (3) for $\ov\gm$. Suppose 
$v\in W'$ and let $\bbf$ be a $(\beta_v,\gm_v)$-collapsing polynomial of 
$\rel'$. Also, let $\ba=\pr_\bs\xi$, 
$\bb'=\pr_{W\cap\bs}\vf\fac{\ov\beta_{W\cap\bs}}$, where 
$\vf=\vf_{v\al\beta}$ as before, and $\bb\in\rel'$ such that 
$\bb'=\pr_{W\cap\bs}\bb$. 
By Lemma~\ref{lem:relation-collapse} $\bbf$ can be selected such that 
$\ba\in\bbf(\rel')$ and $\bb[v]\in f_v(\zA_v\fac{\beta_v})$. Let 
$\bc=\bbf_{W\cap U}(\bb)$. We have\\[1mm]
-- $\bc[v]=\bb'[v]$;\\[1mm]
-- $\bc[u]=f_u(\bb'[u])=f'_u(\ba[u])=\ba[u]$ for $u\in W'-\{u\}$, as
$\vf(u)\in\xi(u)=\xi'(u)$;\\[1mm]
-- $\bc[u]=f_u(\bb[u])=f_u(\ba[u])=\ba[u]$ for $u\in\bs-W'$, as 
in this case $f_u(\th_u)\sse\beta_u$, and therefore, since 
$f_u(\ba[u])=\ba[u]$, we have 
$f_u(\max(\zA_u\fac{\beta_u}))=\{\bb[u]\}$.\\[1mm]
Therefore $\bc\in\rel'$, and as $\bc=\pr_\ba\xi'$, the result follows.
\end{proof}

\bibliographystyle{IEEEtran}

\end{document}